\newcites{supporting}{References}
\def\bSig\mathbf{\Sigma}
\def\bSigtld{\widetilde{\mathbf{\Sigma}}}
\def\bmutld{\widetilde{\boldsymbol\mu}}
\def\trans{^{\sf \tiny T}}
\def\bzero{\boldsymbol 0}
\def\bone{\boldsymbol 1}
\def\Gscr{\mathcal{G}}
\def\Dscr{\mathscr{D}}
\def\Wscrhat{\widehat{\mathcal{W}}}
\def\Wscrtld{\widetilde{\mathcal{W}}}
\def\Vscrhat{\widehat{\mathcal{V}}}
\def\Mscr{\mathcal{M}}
\def\Mscrpi{\mathcal{M}_{\pi}}
\def\Mscrmu{\mathcal{M}_{\mu}}
\def\Mscrnp{\mathcal{M}_{np}}
\def\Mscrpitld{\widetilde{\Mscr}_{\pi}}
\def\Mscrmutld{\widetilde{\Mscr}_{\mu}}
\def\Ascr{\mathcal{A}}
\def\Ascrpi{\mathcal{A}_{\balph}}
\def\Ascrmu{\mathcal{A}_{\bbeta}}
\def\Ascrmuk{\mathcal{A}_{\bbeta_k}}
\def\Ascrpitru{\mathcal{A}_{\pi}}
\def\Ascrmutru{\mathcal{A}_{\mu}}
\def\Xscr{\mathcal{X}}
\def\Escr{\mathcal{E}}
\def\Sscr{\mathcal{S}}
\def\Lscr{\mathcal{L}}
\def\bX{\mathbf{X}}
\def\bx{\mathbf{x}}
\def\bXtld{\widetilde{\bX}}
\def\bZ{\mathbf{Z}}
\def\bXdagT{\bX^{\dagger\sf \tiny T}}
\def\bXddagT{\bX^{\ddagger\sf \tiny T}}
\def\bSbar{\bar{\mathbf{S}}}
\def\bShat{\widehat{\mathbf{S}}}
\def\bV{\mathbf{V}}
\def\bu{\mathbf{u}}
\def\bs{\mathbf{s}}
\def\bv{\mathbf{v}}
\def\lhat{\widehat{l}}
\def\fhat{\widehat{f}}
\def\bm{\boldsymbol m}
\def\ba{\boldsymbol a}
\def\bU{\boldsymbol U}
\def\hhat{\widehat{h}}
\def\bI{\boldsymbol I}
\def\Delthat{\widehat{\Delta}}
\def\Delthatdr{\widehat{\Delta}_{dr}}
\def\Deltbar{\bar{\Delta}}
\def\Deltstr{\Delta^*}
\def\alphtld{\widetilde{\alpha}}
\def\alphhat{\widehat{\alpha}}
\def\alphbar{\bar{\alpha}}
\def\balph{\boldsymbol\alpha}
\def\balphbar{\bar{\boldsymbol\alpha}}
\def\balphhat{\widehat{\balph}}
\def\balphvec{\vec{\balph}}
\def\betatld{\widetilde{\beta}}
\def\bbetavec{\vec{\bbeta}}
\def\betahat{\widehat{\beta}}
\def\betabar{\bar{\beta}}
\def\bbeta{\boldsymbol\beta}
\def\bbetabar{\bar{\boldsymbol\beta}}
\def\bbetahat{\widehat{\boldsymbol\beta}}
\def\Thetpi{\Theta_{\balph}}
\def\Thetmu{\Theta_{\bbeta}}
\def\bthet{\boldsymbol\theta}
\def\bthetbar{\bar{\boldsymbol\theta}}
\def\bthethat{\widehat{\boldsymbol\theta}}
\def\muhat{\widehat{\mu}}
\def\mubar{\bar{\mu}}
\def\mutld{\widetilde{\mu}}
\def\bepsi{\boldsymbol\varepsilon}
\def\bpsi{\boldsymbol\psi}
\def\bdeta{\boldsymbol\eta}
\def\sighat{\widehat{\sigma}}
\def\bPsi{\boldsymbol\Psi}
\def\bUpsi{\boldsymbol\Upsilon}
\def\bzeta{\boldsymbol\zeta}
\def\ddbalph{\frac{\partial}{\partial\balph}}
\def\ddbalphT{\frac{\partial}{\partial\balph\trans}}
\def\ddbbeta{\frac{\partial}{\partial\bbeta}}
\def\ddbbetaT{\frac{\partial}{\partial\bbeta\trans}}
\def\ddbs{\frac{\partial}{\partial\boldsymbol s}}
\def\ddbsT{\frac{\partial}{\partial\boldsymbol s\trans}}
\def\ddbssq{\frac{\partial}{\partial\boldsymbol s^{\otimes 2}}}
\def\ddbsq{\frac{\partial}{\partial\boldsymbol s^{\otimes q}}}
\def\ddpsitwoi{\frac{\partial}{\partial\psi_{2i}}}
\def\pihat{\widehat{\pi}}
\def\ipwhati{\frac{I(T_i=k)}{\pihat_k(\bX_i;\bthethat)}} 
\def\ipwbari{\frac{I(T_i=k)}{\pi_k(\bX_i;\bthetbar)}}
\def\Kdot{\dot{K}}
\def\E{\mathbb E}
\def\P{\mathbb P}
\def\indep{\perp\!\!\!\perp}
\newcommand{\abs}[1]{\left|#1\right|}
\newcommand{\norm}[1]{\left\lVert#1\right\rVert}
\DeclareMathOperator*{\argmax}{arg\,max}
\definecolor{darkred}{RGB}{150,50,50}
\newenvironment{eq} 
{
\begin{equation}
\begin{aligned} 
}
{
\end{aligned}
\end{equation}
}
\newenvironment{eq*} 
{
\begin{equation*}
\begin{aligned} 
}
{
\end{aligned}
\end{equation*}
}
\newenvironment{hproof}{\emph{Proof sketch:}}{}
\title[Estimating ATE with a Double-Index Propensity Score]{Estimating Average Treatment Effects with a Double-Index Propensity Score}
\author{David Cheng$^{1}$,
Abhishek Chakrabortty$^{2}$, Ashwin N. Ananthakrishnan$^{3}$, and Tianxi Cai$^{4,*}$\email{tcai@hsph.harvard.edu} \\
$^{1}$VA Boston Healthcare System, Boston, Massachusetts \\
$^{2}$Department of Statistics, Texas A\&M University, College Station, Texas \\
$^{3}$Division of Gastroenterology, Massachusetts General Hospital, Boston, Massachusetts\\
$^{4}$Department of Biostatistics, Harvard T.H. Chan School of Public Health, Boston, Massachusetts}
\begin{document}

\label{firstpage}

\begin{abstract}
We consider estimating average treatment effects (ATE) of a binary treatment in observational data when data-driven variable selection is needed to select relevant covariates from a moderately large number of available covariates $\bX$.
To leverage covariates among $\bX$ predictive of the outcome for efficiency gain while using regularization to fit a parametric propensity score (PS) model, we consider a dimension reduction of $\bX$ based on fitting both working PS and outcome models using adaptive LASSO.  A novel PS estimator, the Double-index Propensity Score (DiPS), is proposed, in which the treatment status is smoothed over the linear predictors for $\bX$ from both the initial working models.  The ATE is estimated by using the DiPS in a normalized inverse probability weighting (IPW) estimator, which
is found to maintain double-robustness and also local semiparametric efficiency with a fixed number of covariates $p$.
Under misspecification of working models, the smoothing step leads to gains in efficiency and robustness over traditional doubly-robust estimators.
These results are extended to the case where $p$ diverges with sample size and working models are sparse.  Simulations show the benefits of the approach in finite samples.  We illustrate the method by estimating the ATE of statins on colorectal cancer risk in an electronic medical record (EMR) study and the effect of smoking on C-reactive protein (CRP) in the Framingham Offspring Study.
\end{abstract}

\begin{keywords}
Causal inference; double-robustness; electronic medical records; kernel smoothing; regularization; semiparametric efficiency.
\end{keywords}

\maketitle

\section{Introduction}
There is growing interest in evaluating medical treatments and policies in large-scale observational data such as electronic medical records (EMR).
As with any observational data, in the absence of randomization, adjustment for a sufficient set of pre-treatment covariates $\bX$ that satisfy ``no unmeasured confounding'' 
is needed when estimating average treatment effects (ATE) to avoid confounding bias.  This is routinely done using propensity score (PS), outcome regression, and doubly-robust (DR) methods 
\citep{lunceford2004stratification}. 
These methods were initially developed in settings where $p$, the dimension of $\bX$, was small relative to the sample size $n$. 
But large-scale observational data are increasingly collecting rich measurements in large sets of covariates, and data-driven variable selection approaches are needed due to the lack of sufficient prior knowledge to guide manual variable selection.

Effective variable selection for causal effect estimation involves consideration of dependencies between $\bX$ with the treatment status $T\in\{0,1\}$ and outcome $Y$. 
Let $\Ascrpitru \subseteq \{ 1,2,\ldots,p\}$ index the subset of $\bX$ upon which the PS $\pi_1(\bx)=\P(T=1\mid\bX=\bx)$ depends, 
and let $\Ascrmutru$ be an analogous index set for $\bX$ upon which either $\mu_1(\bx)$ or $\mu_0(\bx)$ depends, where $\mu_k(\bx) = \E(Y\mid \bX=\bx,T=k)$.
For any index set $\Sscr \subseteq \{ 1,2,\ldots,p\}$, let $\Sscr^c$ denote its complement in $\{ 1,2,\ldots,p\}$.  
When $\bX$ is sufficient for no unmeasured confounding, the covariates indexed in $\Ascrpitru$ is a reduced set of covariates that is also sufficient for no unmeasured confounding \citep{de2011covariate}.  However, additionally adjusting for purely prognostic covariates in $\Ascrpitru^c \cap\Ascrmutru$ can improve the efficiency of PS, outcome regression, and DR estimators \citep{lunceford2004stratification,hahn2004functional,brookhart2006variable}.

To exploit this phenomenon, we consider an inverse probability weighting (IPW) estimator where the PS is initially estimated by regularized regression. Since variable selection procedures for the PS model would select out covariates in $\Ascrpitru^c \cap \Ascrmutru$, we also estimate a regularized regression model for $\mu_k(\bx)$, for $k=0,1$, to recover variation from covariates in $\Ascrpitru^c\cap\Ascrmutru$ to inform estimation of a calibrated PS.  The calibration is implemented through smoothing $T$ over the linear predictors for $\bX$ from both the initial PS and outcome models, which can be viewed as smoothing over working propensity and prognostic scores \citep{hansen2008prognostic}.  
The resulting IPW estimator maintains double-robustness and achieves the semiparametric efficiency bound when $p$ is fixed, under correctly specified PS and outcome working models.  To the best of our knowledge, this is the first proposal in the literature that demonstrates these properties can be achieved through weighting only, without explicit augmentation.  We show that the estimator is asymptotically linear and use this to characterize large-sample robustness and efficiency properties.  The smoothing results in a refinement of the influence function under misspecification of the outcome model that can potentially result in substantial gains in efficiency relative to traditional DR estimators, which is confirmed in simulations.  
These properties hold in settings where $p$ is either fixed or allowed to diverge slowly with $n$ assuming fixed sparsity indices.

Data-driven variable selection for causal effect estimation has been considered in screening methods based on marginal associations between $\bX$ with $T$ and $Y$ \citep{schneeweiss2009high}, but the results can be misleading because marginal associations need not agree with conditional associations. \cite{de2011covariate} carefully characterized and proposed algorithms to identify minimal subsets of covariates that are sufficient for no unmeasured confounding.
Recent works have considered using regularized regression to select variables and post-selection methods that estimate treatment effects through partially linear models \citep{belloni2013inference} and DR estimators \citep{farrell2015robust,belloni2017program}. These methods focus on delivering uniformly valid inference under high-dimensional regimes assuming approximately sparse models. 
Others have proposed modifying the regularization penalty itself in a way to select the relevant covariates and estimate treatment effects through 
IPW \citep{shortreed2017outcome} and DR estimators \citep{koch2017covariate}.  However, these papers generally do not fully work out the full asymptotic distribution of the final estimator, making efficiency comparisons with established methods difficult.  Some of the methods are also only singly-robust.  
Bayesian model averaging (\cite{cefalu2017model} and references therein) offers a principled alternative for variable selection
but encounters burdensome computations that are possibly infeasible for large $p$.

Our proposed double-index PS (DiPS) can be viewed as a simple and intuitive approach to dimension reduction of $\bX$ for estimating the PS.  
The approach for DiPS closely resembles a method proposed for estimating mean outcomes in the presence of data missing at random \citep{hu2012semiparametric}, except we use the double-score to estimate a PS instead of an outcome model.  
In contrast to their results, we show that a higher-order kernel is required due to the two-dimensional smoothing, find explicit efficiency gains under misspecification of the outcome model, and consider $p$ diverging with $n$.  
There is also some similar intuition shared with collaborative DR methods \citep{van2010collaborative} in that associations with both treatment and outcome are taken into account when estimating a PS. However, DiPS takes a much different approach to estimating the PS. 
In the following, we introduce the proposed method and consider its asymptotic properties in Sections \ref{s:method} and \ref{s:asymptotics}. A perturbation-resampling method is proposed for inference in Section \ref{s:perturbation}. Simulations and applications to estimating treatment effects in an EMR study and cohort study are presented in Section \ref{s:numerical}. We conclude with some additional remarks in Section \ref{s:discussion}.

\section{Method} \label{s:method}
\subsection{Notations and Problem Setup}
Let $\bZ_i = (Y_i,T_i,\bX_i\trans)\trans$ be the observed data for the $i$th subject, where $Y_i$ is an outcome that could be modeled by a generalized linear model (GLM), $T_i\in\{ 0,1\}$ a binary treatment, and $\bX_i$ is a $p$-dimensional vector of covariates 
with support $\Xscr\subseteq\mathbb{R}^{p}$.  Here $p$ is allowed to diverge slowly with $n$ such that $log(p)/log(n) \to \nu$, for $\nu \in [0,1)$, which includes the case where $p$ is fixed by taking $\nu=0$.  For a given $n$, the observed data consists of independent and identically distributed (iid) observations $\Dscr=\{ \bZ_i : i=1,\ldots,n\}$ drawn from a distribution $\P_n$, which potentially may vary with $n$.  We suppress the dependence in the notations, 
implicitly assuming statements involving $\P$ and associated statistical functionals hold for each $n$.  Let $Y_i^{(1)}$ and $Y_i^{(0)}$ denote the counterfactual outcomes had a subject received treatment or control.  Based on $\Dscr$, we want to make inferences about the average treatment effect (ATE):
\begin{align}
\Delta = \E\{ Y^{(1)}\} - \E\{ Y^{(0)}\} = \mu_1 - \mu_0.
\end{align}

For identifiability, we require the following standard causal inference assumptions:
\begin{align}
&Y = T Y^{(1)} + (1-T) Y^{(0)} \text{ with probability } 1\label{a:consi}\\
&\pi_1(\bx) \in [\epsilon_{\pi},1-\epsilon_{\pi}] \text{ for some } \epsilon_{\pi} >0, \text{ when } \bx \in \Xscr \label{a:posi}\\
&Y^{(1)} \indep T \mid \bX \text{ and } Y^{(0)}\indep T \mid \bX, \label{a:nuca}
\end{align}
where $\pi_k(\bx) = \P(T=k\mid \bX=\bx)$, for $k=0,1$.  The third condition assumes that $\bX$ is a sufficient set of covariates such that no unmeasured confounding holds given the entire $\bX$.
Under these assumptions, $\Delta$ can be identified from the observed data distribution $\P$ through:
\begin{align*}
\Deltstr = \E\{ \mu_1(\bX) - \mu_0(\bX)\} = \E\left\{ \frac{I(T=1)Y}{\pi_1(\bX)}-\frac{I(T=0)Y}{\pi_0(\bX)}\right\},
\end{align*}
where $\mu_k(\bx) = \E(Y\mid \bX=\bx,T=k)$, for $k=0,1$.  We will consider an estimator based on the IPW form that will nevertheless be doubly-robust so that it is consistent under models where either $\pi_k(\bx)$ or $\mu_k(\bx)$ is correctly specified.

\subsection{Parametric Models for Nuisance Functions} \label{s:models}
We consider parametric modeling as a means to reduce the dimensions of $\bX$ when estimating the PS.  For reference, let $\Mscrnp$ be the nonparametric model for the distribution of $\bZ$, $\P$, that has no restrictions on $\P$ except requiring the second moment of $\bZ$ to be finite.  Let $\Mscrpi \subseteq \Mscrnp$ and $\Mscrmu \subseteq \Mscrnp$ respectively denote parametric working models under which:
\begin{align}
&\pi_1(\bx) = g_{\pi}(\alpha_0+\balph\trans\bx), \label{e:psmod} \\
\text{and } &\mu_k(\bx) = g_{\mu}(\beta_0 + \beta_1 k + \bbeta_k\trans\bx), \text{ for } k=0,1,\label{e:ormod}
\end{align}
where $g_{\pi}(\cdot)$ and $g_{\mu}(\cdot)$ are known link functions, and $\balphvec=(\alpha_0,\balph\trans)\trans\in\Thetpi \subseteq \mathbb{R}^{p+1}$ and $\bbetavec = (\beta_0,\beta_1,\bbeta_0\trans,\bbeta_1\trans)\trans \in \Thetmu \subseteq \mathbb{R}^{2p+2}$ are unknown parameters.
In \eqref{e:ormod} slopes are allowed to differ by treatment arms to allow for heterogeneous effects of $T$ for subjects with different $\bX$ even with a linear link.  When it is reasonable to assume heterogeneity is weak or nonexistent, it may be beneficial for efficiency to restrict $\bbeta_0 = \bbeta_1$.

Regardless of the validity of either working model (i.e. whether $\P\in \Mscr_{\pi} \cup \Mscr_{\mu}$),
we first obtain estimates of $\balph$ and $\bbeta_k$'s through adaptive LASSO \citep{zou2006adaptive}:
\begin{align}
&(\alphhat_0,\balphhat\trans)\trans = \argmax_{\balphvec}\left\{n^{-1}\sum_{i=1}^n \ell_{\pi}(\balphvec;T_i,\bX_i) - \lambda_{\pi,n}\sum_{j=1}^p \abs{\alpha_j} / \abs{\alphtld_{j}}^{\gamma}\right\} \label{e:psest}\\ 
&(\betahat_0,\betahat_1,\bbetahat_0\trans,\bbetahat_1\trans)\trans = \argmax_{\bbetavec}\left\{ n^{-1}\sum_{i=1}^n \ell_{\mu}(\bbetavec;\bZ_i) -\lambda_{\mu,n} \left( \abs{\beta}_1 / \abs{\betatld_1}^{\gamma}  + \sum_{k=0}^1\sum_{j=2}^p \abs{\beta_{k,j}} / \abs{\betatld_{k,j}}^{\gamma}\right) \right\}, \label{e:orest}
\end{align}
where $\ell_{\pi}(\balphvec;T_i,\bX_i)$ denotes the log-likelihood for $\balphvec$ under $\Mscr_{\pi}$ given $T_i$ and $\bX_i$, $\ell_{\mu}(\bbetavec;\bZ_i)$ is a log-likelihood for $\bbetavec$ from a GLM suitable for the outcome type of $Y$ under $\Mscr_{\mu}$ given $\bZ_i$, $\alphtld_{j}$, $\betatld_1$, and $\betatld_{k,j}$ are initial root-$n$ consistent estimates,
$\lambda_{\pi,n}$ is a tuning parmaeter such that $n^{1/2}\lambda_{\pi,n}\to 0$ and $n^{(1-\nu)(1+\gamma)/2}\lambda_{\pi,n} \to\infty$, with $\gamma > 2\nu/(1-\nu)$, and similarly for $\lambda_{\mu,n}$ \citep{zou2009adaptive}. We specify adaptive LASSO here to estimate the nuisance parameters for concreteness, but use of other penalized likelihood methods can also be justified, so long as they have an oracle property, as in Theorem 2 of \cite{zou2006adaptive} and described below.

Under model \eqref{e:psmod} and \eqref{e:ormod}, we assume that $\balph$ and $\bbeta_k$, for $k=0,1$, are sparse.  More generally, regardless of whether working models are correct or misspecified, we assume that there exist least false parameters $(\alphbar_0,\balphbar\trans)\trans$ and $(\betabar_0,\betabar_1,\bbetabar_0\trans,\bbetabar_1\trans)\trans$ \citep{lu2012robustness} such that:
\begin{align}\label{a:leastfalse}
\begin{split}
&(\alphbar_0,\balphbar\trans)\trans  \text{ uniquely maximize } \E\left\{ \ell_{\pi}(\balphvec;T_i,\bX_i)\right\} \\ &(\betabar_0,\betabar_1,\bbetabar_0\trans,\bbetabar_1\trans)\trans \text{ uniquely maximize } \E\left\{ \ell_{\mu}(\bbetavec;\bZ_i)\right\}.
\end{split}
\end{align}
Let $\Ascrpi$ and $\Ascrmuk$ be respective supports for $\balphbar$ and $\bbetabar_k$ and let $s_{\balph}= \abs{\Ascrpi}$ and $s_{\bbeta_k} = \abs{\Ascrmuk}$ be the sparsity indices.   We further assume $\balphbar$ and $\bbetabar_k$ have fixed sparsity such that:
\begin{align} \label{a:spars}
s_{\balph}, s_{\bbeta_0} \text{ and } s_{\bbeta_1} \text{ are fixed as } n\to\infty .
\end{align}
For any vector $\bv$ of length $p$ and any index set $\Sscr \subseteq \{ 1,2,\ldots,p\}$, let $\bv_{\Sscr}$ denote the subvector of $\bv$ restricted to elements indexed in $\Sscr$. Assumption \eqref{a:leastfalse} is a high-level assumption that would be required for $\balphhat$ and $\bbetahat_k$ to maintain an oracle property with respect to the least false parameters $\balphbar$ and $\bbetabar_k$ under possibly misspecified working models.  Under this assumption using arguments similar to those in \cite{lu2012robustness} and \cite{zou2009adaptive} it can be shown that $\P(\balphhat_{\Ascrpi^c}=\bzero)\to 1$ and admits an expansion of the form $n^{1/2}(\balphhat-\balphbar)_{\Ascrpi}=n^{-1/2}\sum_{i=1}^n \bPsi_{i,\Ascrpi} + o_p(1)$, which would yield the asymptotic normality results of the oracle property,
and similarly for $\bbetahat_k$.  We rely on these results along with \eqref{a:spars} to show that the DiPS IPW is asymptotically linear in Theorem \ref{t:IFexp}. In regimes where $\nu>0$, \eqref{a:spars} models a setting in which a small number of covariates exhibit non-negligible associations with $T$ and $Y$ and a majority of covariates are noise.  Assumption \ref{a:spars} may not be required for asymptotic linearity and can potentially be relaxed allowing $s_{\balph}$ and $s_{\bbeta_k}$ to diverge slowly, for example, if they are $o(n^{1/3})$. We invoke this assumption to avoid complications of a growing support, which may need triangular array asymptotics to accommodate dependence of the support on $n$.

\subsection{Double-Index Propensity Score and IPW Estimator} \label{s:estimator}
To mitigate the effects of misspecification of \eqref{e:psmod}, one could perform nonparametric smoothing of $T$ over $\balphhat\trans\bX$ to calibrate the initial PS estimator $g_{\pi}(\alphhat_0 + \bX\trans\balphhat)$.
We consider smoothing over not only $\balphhat\trans\bX$ but also $\bbetahat_k\trans\bX$ as well to allow variation in prognostic covariates indexed in $\Ascrmuk$ to inform this calibration.  Such covariates are reduced into $\bbetahat_k\trans\bX$ to allow for nonparametric kernel smoothing in low (two) dimensions.
The DiPS estimator for each treatment is:
\begin{align}
\pihat_k(\bx;\bthethat_k)= \frac{n^{-1}\sum_{j=1}^nK_h\{(\balphhat,\bbetahat_k)\trans(\bX_j-\bx)\}I(T_j=k)}{n^{-1}\sum_{j=1}^n K_h\{(\balphhat,\bbetahat_k)\trans(\bX_j-\bx)\}}, \text{ for } k=0,1,
\end{align}
where $\bthethat_k = (\balphhat\trans,\bbetahat_k\trans)\trans$, $K_h(\bu)= h^{-2}K(\bu/h)$, and $K(\bu)$ is a bivariate $q$-th order kernel function with $q>2$.  A higher-order kernel is required here for the asymptotics to be well-behaved, which is the price for estimating the nuisance functions $\pi_k(\bx)$ using two-dimensional smoothing.  This allows for the possibility of negative values for $\pihat_k(\bx;\bthethat_k)$.  Nevertheless, $\pihat_k(\bx;\bthethat_k)$ are nuisance estimates not of direct interest, and we find that such negative PS estimates typically occur infrequently, occurring on average in simulations in $0.01\%$ to $2.10\%$ of observations depending the size of $n$ and $p$ across scenarios where working models are correct or incorrectly specified (Web Appendix D).  
As they are infrequent and do not appear to compromise the performance of the final estimator, they can potentially be left as is when encountered in practice. Alternatively, methods that discard or trim PS estimates to handle near-violations of positivity, as in Assumption \eqref{a:posi}, can be considered \citep{crump2009dealing}.
A monotone transformation of the input scores for each treatment $\bShat_k = (\balphhat,\bbetahat_k)\trans\bX$ can be applied prior to smoothing to improve finite sample performance \citep{wand1991transformations}.  In numerical studies, for instance, we applied a probability integral transform based on the normal cumulative distribution function to the standardized scores to obtain approximately uniformly distributed inputs.  The components of $\bShat_k$ can also be scaled such that a common bandwidth $h$ can be used for both components of the score.

With $\pi_k(\bx)$ estimated by $\pihat_k(\bx;\bthethat_k)$, the estimator for $\Delta$ is given by $\Delthat = \muhat_1 - \muhat_0$, where:
\begin{align}
\muhat_k = \left\{ \sum_{i=1}^n \frac{I(T_i = k)}{\pihat_k(\bX_i;\bthethat_k)}\right\}^{-1}\left\{ \sum_{i=1}^n \frac{I(T_i = k)Y_i}{\pihat_k(\bX_i;\bthethat_k)}\right\}^{-1}, \text{ for } k=0,1.
\end{align}
This is the usual normalized IPW estimator, where the PS is estimated by the DiPS.  The intuition for double-robustness of the estimator is as follows.  Regardless of the validity of either working model, provided the asymptotics are well-behaved, $\muhat_k$ is consistent for:
\begin{align*}
\mubar_k = \E\left\{ \frac{I(T_i=k)Y_i}{\pi_k(\bX_i;\bthetbar_k)}\right\}, \text{ for } k=0,1,
\end{align*}
where $\bthetbar_k = (\balphbar\trans,\bbetabar_k\trans)\trans$, and $\pi_k(\bx;\bthetbar_k) =\P(T_i=k \mid \balphbar\trans\bX_i=\balphbar\trans\bx,\bbetabar_k\trans\bX_i = \bbetabar_k\trans\bx)$.  Under $\Mscrpi$, $\pi_k(\bx;\bthetbar_k) = \pi_k(\bx)$ so that the estimand, under the causal assumptions \eqref{a:consi}-\eqref{a:nuca}, reduces to:
\begin{align*}
\mubar_k = \E\left\{ \frac{I(T_i=k)Y_i}{\pi_k(\bX_i)}\right\} = \E \left\{ Y_i^{(k)}\right\}, \text{ for } k=0,1.
\end{align*}
On the other hand, under $\Mscrmu$, $\E(Y_i \mid \balphbar\trans\bX_i=\balphbar\trans\bx,\bbetabar_k\trans\bX_i=\bbetabar_k\trans\bx,T_i =k) = \mu_k(\bx)$ so that:
\begin{align*}
\mubar_k &= \E\left\{ \E(Y_i \mid \balphbar\trans\bX_i,\bbetabar\trans\bX_i,T_i =k)\right\} =\E\left\{ \mu_k(\bX_i)\right\} = \E\{ Y_i^{(k)}\}, \text{ for } k=0,1.
\end{align*}
In the following, we show that $\muhat_k$ (and thus $\Delthat$) are asymptotically linear. We then subsequently examine robustness and efficiency properties using the expansion.

\section{Asymptotic Robustness and Efficiency Properties} \label{s:asymptotics}
We directly show in Web Appendix B that $\muhat_k$ is asymptotically linear for $k=0,1$ in general without assuming either of the working models are correct.  Let $\Deltbar = \mubar_1 - \mubar_0$ and 
$\Wscrhat_k = n^{1/2}(\muhat_k - \mubar_k)$ for $k=0,1$ so that $n^{1/2}(\Delthat - \Deltbar) = \Wscrhat_1 - \Wscrhat_0$.
\begin{theorem} \label{t:IFexp}
Suppose that causal assumptions \eqref{a:consi}-\eqref{a:nuca}, the least false parameter and sparsity assumptions \eqref{a:leastfalse}-\eqref{a:spars} and regularity conditions in Web Appendix A hold.  
If $log(p)/log(n) \to \nu$ for $\nu \in [0,1)$, then $\muhat_k$ is asymptotically linear in that it admits the expansion:
\begin{align}
\Wscrhat_k &= n^{-1/2}\sum_{i=1}^n \frac{I(T_i =k)Y_i}{\pi_k(\bX_i;\bthetbar_k)} - \left\{ \frac{I(T_i = k)}{\pi_k(\bX_i;\bthetbar_k)}-1\right\}\E(Y_i \mid \balphbar\trans\bX_i,\bbetabar_k\trans\bX_i,T_i =k ) -\mubar_k \label{e:IFeff}\\
&\qquad + n^{-1/2}\sum_{i=1}^n \bu_{k,\Ascrpi}\trans \bPsi_{i,\Ascrpi} + \bv_{k,\Ascrmuk}\trans \bUpsi_{i,k,\Ascrmuk} + O_p(n^{1/2}h^q + n^{-1/2}h^{-2}), \label{e:IFnui}
\end{align}
for $k=0,1$, where $\bu_{k,\Ascrpi}$ and $\bu_{k,\Ascrmuk}$ are deterministic vectors, $\bPsi_{i,\Ascrpi}$ and $\bUpsi_{i,k,\Ascrmuk}$ are influence functions from asymptotic expansions of $\balphhat_{\Ascrpi}$ and $\bbetahat_{k,\Ascrmuk}$.
Under model $\Mscrpi$ $\bv_{k,\Ascrmuk} = \bzero$, for $k=0,1$.  Under $\Mscrpi \cap \Mscrmu$, we additionally have that $\bu_{k,\Ascrpi} = \bzero$, for $k=0,1$.
\end{theorem}
\begin{hproof} $\Wscrhat_k$ can be decomposed as:
\begin{align*}
\Wscrhat_k &= n^{-1/2}\sum_{i=1}^n \frac{I(T_i=k)}{\pi_k(\bX_i;\bthetbar_k)} (Y_i - \mubar_k) + n^{-1/2}\sum_{i=1}^n \left\{ \frac{I(T_i=k)}{\pihat_k(\bX_i;\bthetbar_k)} - \frac{I(T_i=k)}{\pi_k(\bX_i;\bthetbar_k)}\right\} (Y_i - \mubar_k) \\
&\qquad + n^{-1/2}\sum_{i=1}^n \left\{ \frac{I(T_i=k)}{\pihat_k(\bX_i;\bthethat_k)} - \frac{I(T_i=k)}{\pi_k(\bX_i;\bthetbar_k)}\right\} (Y_i - \mubar_k) + o_p(1).
\end{align*}
The first term directly contributes to the expansion. The second term is the contribution from re-estimating the PS through kernel smoothing given $\bthetbar_k$.  
We apply a V-statistic projection lemma \citep{newey1994large} to obtain an asymptotically linear representation.  The third term can be expanded by Taylor expansion into terms of the form $\bu_k\trans n^{1/2}(\balphhat-\balphbar)$ and $\bv_k\trans n^{1/2}(\bbetahat-\bbetabar)$.  Applying the selection consistency that $\P(\balphhat_{\Ascrpi^c} = \bzero) \to 1$,
$\bu_k\trans n^{1/2}(\balphhat-\balphbar) = \bu_{k,\Ascrpi}\trans n^{1/2}(\balphhat-\balphbar)_{\Ascrpi} + o_p(1)$.
Lastly, we use that $n^{1/2}(\balphhat-\balphbar)_{\Ascrpi}=n^{-1/2}\sum_{i=1}^n \bPsi_{i,\Ascrpi} + o_p(1)$ and work out the forms of the loading vector $\bu_{k,\Ascrpi}$ and repeat for $\bbetahat_k$ to complete the expansion.
\end{hproof}

Let $\Delthatdr=\muhat_{1,dr}-\muhat_{0,dr}$ denote the usual doubly-robust estimator, as in Equation (9) of \cite{lunceford2004stratification}, with the PS $\pi_k(\bx)$ and mean outcome $\mu_k(\bx)$ estimated in the same way as through \eqref{e:psest} and \eqref{e:orest}.  The influence function expansion for $\Delthat$ in Theorem \ref{t:IFexp} is nearly identical to that of $\Delthatdr$.  The terms in \eqref{e:IFeff} would be the same except $\pi_k(\bX_i;\bthetbar_k)$ and $\E(Y_i \mid \balphbar\trans\bX_i,\bbetabar_k\trans\bX_i,T_i=k)$ replaces asymptotic estimates under parametric models.  Terms in \eqref{e:IFnui} analogously represent the additional contributions from estimating the nuisance parameters.  No contribution from smoothing is incurred provided the bandwidths are suitably chosen.  This similarity in the influence functions yields similar robustness and efficiency properties, which are improved upon under model misspecification due to the smoothing.

\subsection{Robustness}
As a consequence of Theorem \ref{t:IFexp}, $\Delthat$ is root-$n$ consistent for $\Deltbar$ so that $\Delthat - \Deltbar = O_p(n^{-1/2})$ provided that $h=O(n^{-\alpha})$ for $\alpha \in (\frac{1}{2q},\frac{1}{4})$.  As discussed in Section \ref{s:estimator}, under $\Mscrpi \cup \Mscrmu$, $\Deltbar=\Delta$.  Hence $\Delthat$ is \emph{doubly-robust} for $\Delta$ in that $\Delthat$ is root-$n$ consistent for $\Delta$ under $\Mscrpi\cup\Mscrmu$.
Beyond this usual form of double-robustness, if the PS model specification is incorrect, we expect the calibration step to at least partially correct for the misspecfication in large samples since $\pi_k(\bx;\bthetbar_k)$ is closer to the true $\pi_k(\bx)$ than the misspecified parametric model $g_{\pi}(\alphbar_0+\balphbar\trans\bx)$.  
Let $\widetilde{\Mscr}_{\pi}$ denote a model under which $
\pi_1(\bx) = \widetilde{g}_{\pi}(\balph\trans\bx)$ for some \emph{unknown} link function $\widetilde{g}_{\pi}(\cdot)$ and unknown $\balph \in\mathbb{R}^{p}$, \emph{and} $\bX$ are known to be elliptically distributed such that $\E(\ba\trans\bX\mid \balph_*\trans\bX)$ exists and is linear in $\balph_*\trans\bX$, where $\balph_*$ denotes the true $\balph$ (e.g. if $\bX$ is multivariate normal).  By the results of \cite{li1989regression}, it can be shown that $\balphbar=c\balph_*$ for some scalar $c$ under $\Mscrpitld$.  But since $\pihat_k(\bx;\bthethat_k)$ is consistent for  $\pi_k(\bx;\bthetbar_k)=\P(T=k\mid\balphbar\trans\bX=\balphbar\trans\bx,\bbetabar_k\trans\bX=\bbetabar_k\trans\bx)$, it recovers $\pi_k(\bx)$ under $\Mscrpitld$.  Consequently, $\Delthat$ also has some mild benefits in robustness in that $\Delthat-\Delta = O_p(n^{-1/2})$ under the slightly larger model $\Mscrpi\cup \Mscrpitld\cup\Mscrmu$.
The same phenomenon also occurs when estimating $\bbeta_k$ under misspecification of the link in \eqref{e:ormod}, if we do not assume $\bbeta_0=\bbeta_1$.
In this case, if $\Mscrmutld$ is an analogous model under which $\mu_1(\bx) = \widetilde{g}_{\mu,1}(\bbeta_1\trans\bx)$ and $\mu_0(\bx)=\widetilde{g}_{\mu,0}(\bbeta_0\trans\bx)$
for some unknown link functions $\widetilde{g}_{\mu,0}(\cdot)$ and $\widetilde{g}_{\mu,1}(\cdot)$ and $\bX$ are elliptically distributed, then $\Delthat-\Delta = O_p(n^{-1/2})$ under the slightly larger model $\Mscrpi\cup \Mscrpitld \cup \Mscrmu\cup \Mscrmutld$.  This does not hold when $\bbeta_0=\bbeta_1$, as $T$ is binary so $(T,\bX\trans)\trans$ is not exactly elliptically distributed.  But the result may still be expected to hold approximately.

\subsection{Efficiency} \label{ss:eff}
Let the terms contributed to the influence function for $\Delthat$ when $\balph$ and $\bbeta_k$ are known be:
\begin{align}
\varphi_{i,k} = \frac{I(T_i =k)Y_i}{\pi_k(\bX_i;\bthetbar_k)} - \left\{ \frac{I(T_i = k)}{\pi_k(\bX_i;\bthetbar_k)}-1\right\}\E(Y_i \mid \balphbar\trans\bX_i,\bbetabar_k\trans\bX_i,T_i =k ) -\mubar_k.
\end{align}
Under $\Mscrpi \cap \Mscrmu$, $\varphi_{i,k}$ is the full influence function for $\Delthat$.  This is the efficient influence function for $\Deltstr$ under $\Mscrnp$ at distributions for $\P$ belonging to $\Mscrpi \cap \Mscrmu$ when $p$ is fixed \citep{robins1994estimation,tsiatis2007semiparametric}, since $\E(Y_i \mid \balphbar\trans\bX_i=\balphbar\trans\bx,\bbetabar_k\trans\bX_i=\bbetabar_k\trans\bx,T_i=k)=\mu_k(\bx)$ and $\pi_k(\bx;\bthetbar_k)=\pi_k(\bx)$.  When $\nu>0$ so that $p$ diverges with $n$, there are no well-established semiparametric efficiency bounds.  However with fixed sparsity indices \eqref{a:spars}, the asymptotic variance still reaches the same bound had $p$ been fixed.

Beyond this characterization of efficiency that parallels that of $\Delthatdr$, there are additional benefits of $\Delthat$ under $\Mscrpi\cap \Mscrmu^c$.  In this case, akin to $\Delthatdr$, estimating $\bbeta_k$ does not contribute to the asymptotic variance since $\bv_{k,\Ascrmuk}=\bzero$, and a similar $n^{1/2}\bu_{k,\Ascrpi}\trans(\balphhat-\balphbar)_{\Ascrpi}$ term is contributed from estimating $\balph$.  The analogous term in the expansion for $\Delthatdr$ contributes the negative of a projection of the preceding terms onto the linear span of the score function for $\balph$, restricted to components in $\Ascrpi$, to its influence function (Section 9.1 of \cite{tsiatis2007semiparametric}).  The same interpretation of the influence function can be adopted for $\Delthat$.
\begin{theorem} \label{t:effgain}
Let $\bU_{\balph}$ be the score for $\balph$ under $\Mscrpi$ and let $[\bU_{\balph,\Ascrpi}]$ denote the linear span of its components indexed in $\Ascrpi$. In the Hilbert space of random variables with mean $0$ and finite variance $\Lscr_2^0$ with inner product given by the covariance, let $\Pi\{ V \mid \Sscr\}$ denote the projection of some $V\in \Lscr_2^0$ into a subspace $\Sscr \subseteq \Lscr_2^0$.  If the assumptions required for Theorem \ref{t:IFexp} hold 
, under $\Mscrpi$, $\bu_{k,\Ascrpi}\trans n^{1/2}(\balphhat-\balphbar)_{\Ascrpi} = -n^{-1/2}\sum_{i=1}^n\Pi\{\varphi_{i,k}\mid[\bU_{\balph,\Ascrpi}]\} + o_p(1)$.
\end{theorem}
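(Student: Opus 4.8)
Fix $k\in\{0,1\}$. The plan is to reduce the claim to a deterministic moment identity and then verify that identity using the structure forced by $\Mscrpi$. First, since under $\Mscrpi$ the working PS model is correctly specified, $\balphbar$ is the true coefficient, and I would invoke the sparsity of $\balphbar$ together with the oracle property of $p_{\pi}$ (Web Appendix A) to conclude that, with probability tending to $1$, the support of $\balphhat$ equals $\Ascrpi$ and $\balphhat_{\Ascrpi}$ is asymptotically the unpenalized maximum likelihood estimator of $\balph_{\Ascrpi}$ in the submodel carrying only the covariates indexed by $\Ascrpi$; hence
\[
n^{1/2}(\balphhat-\balphbar)_{\Ascrpi}=I_{\balph,\Ascrpi}^{-1}\,n^{-1/2}\sum_{i=1}^n\bU_{\balph,\Ascrpi,i}+o_p(1),\qquad I_{\balph,\Ascrpi}=\E(\bU_{\balph,\Ascrpi}\bU_{\balph,\Ascrpi}\trans).
\]
Substituting this into the expansion of $\Wscrhat_k$ in Theorem~\ref{t:IFexp} (using $\bv_{k,\Ascrmuk}=\bzero$ under $\Mscrpi$) gives $\bu_{k,\Ascrpi}\trans n^{1/2}(\balphhat-\balphbar)_{\Ascrpi}=\bu_{k,\Ascrpi}\trans I_{\balph,\Ascrpi}^{-1}n^{-1/2}\sum_i\bU_{\balph,\Ascrpi,i}+o_p(1)$, while the normal equations for the $\Lscr_2^0$-projection onto $[\bU_{\balph,\Ascrpi}]$ give $\Pi\{\varphi_{i,k}\mid[\bU_{\balph,\Ascrpi}]\}=\E(\varphi_{i,k}\bU_{\balph,\Ascrpi}\trans)I_{\balph,\Ascrpi}^{-1}\bU_{\balph,\Ascrpi,i}$. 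Matching these two, the theorem will follow once I establish the deterministic identity $\bu_{k,\Ascrpi}=-\E\{\bU_{\balph,\Ascrpi,i}\,\varphi_{i,k}\}$ under $\Mscrpi$.

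To prove this identity I would compute $\E\{\bU_{\balph,\Ascrpi,i}\varphi_{i,k}\}$ directly. Under $\Mscrpi$ one has $\pi_k(\bx;\bthetbar_k)=\pi_k(\bx)=g_{\pi}(\alphbar_0+\balphbar\trans\bx)$; the PS score takes the product form $\bU_{\balph}=w(\balphbar\trans\bX)\{T-\pi_1(\bX)\}\bX$ with a scalar weight $w(\cdot)$ fixed by $g_{\pi}$, so $\E(\bU_{\balph}\mid\bX)=\bzero$; and ignorability yields $\E\{I(T=k)Y\mid\bX\}=\pi_k(\bX)\mu_k(\bX)$ and $\E\{(T-\pi_1(\bX))I(T=k)\mid\bX\}=(-1)^{1-k}\pi_1(\bX)\pi_0(\bX)$. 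Expanding over the three summands of $\varphi_{i,k}$ and conditioning on $\bX$, the $-\mubar_k$ summand contributes $\bzero$, the $I(T=k)Y/\pi_k(\bX)$ summand contributes $\E\{\omega_k(\balphbar\trans\bX)\mu_k(\bX)\bX_{\Ascrpi}\}$, and the $-\{I(T=k)/\pi_k(\bX)-1\}h_k(\bX)$ summand contributes $-\E\{\omega_k(\balphbar\trans\bX)h_k(\bX)\bX_{\Ascrpi}\}$, where $h_k(\bx)=\E(Y\mid\balphbar\trans\bX=\balphbar\trans\bx,\bbetabar_k\trans\bX=\bbetabar_k\trans\bx,T=k)$ and $\omega_k(\cdot)$ is a scalar function of the PS index; so $\E\{\bU_{\balph,\Ascrpi,i}\varphi_{i,k}\}=\E\{\omega_k(\balphbar\trans\bX)[\mu_k(\bX)-h_k(\bX)]\bX_{\Ascrpi}\}$. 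Finally I would match this against $\bu_{k,\Ascrpi}$: the proof of Theorem~\ref{t:IFexp} characterizes $\bu_{k,\Ascrpi}$ as the derivative in $\balph_{\Ascrpi}$, at $(\balphbar,\bbetabar_k)$, of the limiting estimand $m_k(\balph,\bbeta_k)=\E\{\E(Y\mid\balph\trans\bX,\bbeta_k\trans\bX,T=k)\}$, and under $\Mscrpi$ this derivative collapses to exactly $\E\{\omega_k(\balphbar\trans\bX)[h_k(\bX)-\mu_k(\bX)]\bX_{\Ascrpi}\}=-\E\{\bU_{\balph,\Ascrpi,i}\varphi_{i,k}\}$, completing the argument.

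The step I expect to be the main obstacle is the last one --- pinning down $\bu_{k,\Ascrpi}$. Because $\pihat_k$ smooths over the \emph{estimated} index $(\balphhat,\bbetahat_k)\trans\bX$, perturbing $\balph$ moves both the conditioning variable and the regression surface of the nuisance functional, so one must differentiate a conditional expectation through the parameter appearing in its conditioning event; I would handle this with the V-statistic projection lemma \citep{newey1994large} already used in the proof of Theorem~\ref{t:IFexp}, after which Theorem~\ref{t:effgain} reduces to the elementary bookkeeping above, which simplifies only because $\Mscrpi$ forces the DiPS limit to recover the true PS. A conceptually cleaner --- though here harder to make rigorous --- route is to note that replacing the finite-dimensional PS parameter by its efficient MLE under a correctly specified model subtracts from the estimating function its projection onto the nuisance tangent space; the smoothing step is exactly what prevents writing $\muhat_k$ as a finite-dimensional Z-estimator, forcing the direct verification.
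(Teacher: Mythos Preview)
Your proposal is correct and follows essentially the same route as the paper: use the oracle expansion of $\balphhat_{\Ascrpi}$ and the least-squares form of the projection to reduce to the moment identity $\bu_{k,\Ascrpi}\trans=-\E(\varphi_{i,k}\bU_{\balph,\Ascrpi}\trans)+o(1)$, then verify this identity under $\Mscrpi$ by direct computation of both sides. The only cosmetic difference is that the paper conditions on $\bSbar$ rather than $\bX$ when evaluating the covariance, and for $\bu_{k,\Ascrpi}$ it works directly with the explicit kernel-derivative expression from the proof of Theorem~\ref{t:IFexp} (simplified under $T\indep\bX\mid\bSbar$ via change-of-variables to the analogue of \eqref{e:vsimp}) rather than interpreting it as $\partial m_k/\partial\balph_{\Ascrpi}$---but both computations land on the same closed form up to $O(h)$.
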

The proof is based on simplifying $\bu_{k,\Ascrpi}$ and is given in Web Appendix B. This result can be used to show that the asymptotic variance of $\Delthat$ is lower than that of $\Delthat_{dr}$ under $\Mscrpi \cap \Mscrmu^c$.
Based on this result, under $\Mscrpi \cap \Mscrmu^c$ the influence function for $\muhat_k$ is $\varphi_{i,k}-\Pi\left\{ \varphi_{i,k} \mid [\bU_{\balph,\Ascrpi}]\right\}$, and for the usual DR estimator $\muhat_{k,dr}$ is $\phi_{i,k} - \Pi\left\{ \phi_{i,k} \mid [\bU_{\balph,\Ascrpi}]\right\}$, where:
\begin{align*}
\phi_{i,k} = \frac{I(T_i =k)Y_i}{\pi_k(\bX_i)} - \left\{ \frac{I(T_i = k)}{\pi_k(\bX_i)}-1\right\}g_{\mu}(\betabar_{0} + \betabar_1 k+\bbetabar_k\trans\bX_i) -\mubar_{k}.
\end{align*}
But since $\E(Y_i\mid \balphbar\trans\bX_i=\balphbar\trans\bx,\bbetabar_k\trans\bX_i=\bbetabar_k\trans\bx,T_i =k)$ better approximates $\mu_k(\bx)$ than the asymptotic estimate under the misspecified parametric model $g_{\mu}(\betabar_0 + \betabar_1 k +\bbetabar_k\trans\bx)$, it can then be shown that $\E(\phi_{i,k}^2)>\E(\varphi_{i,k}^2)$ for $k=0,1$. Since the influence functions involve projections onto the same space $[\bU_{\balph,\Ascrpi}]$, it can be seen through geometric argument that $\E\left[ \varphi_{i,k}-\Pi\left\{ \varphi_{i,k} \mid [\bU_{\balph,\Ascrpi}]\right\}\right]^2 < \E\left[ \phi_{i,k} - \Pi\left\{ \phi_{i,k} \mid [\bU_{\balph,\Ascrpi}]\right\}\right]^2$, so that $\Delthat$ is more efficient than $\Delthat_{dr}$ under $\Mscrpi \cap \Mscrmu^c$.
We show in the simulation studies that this improvement can lead to substantial efficiency gains under $\Mscrpi \cap \Mscrmu^c$ in finite samples.  These unique robustness and efficiency properties distinguish $\Delthat$ from $\Delthat_{dr}$ and its variants.
We next consider a perturbation scheme to estimate standard errors (SE) and confidence intervals (CI) for $\Delthat$.

\section{Perturbation Resampling} \label{s:perturbation}
Although the asymptotic variance of $\Delthat$ can be determined through its influence function specified in Theorem \eqref{t:IFexp}, a direct empirical estimate based on the influence function is infeasible because it involves functionals of $\P$ that are difficult to estimate.  Instead we propose a simple perturbation-resampling procedure.  Let $\Gscr = \{ G_i : i=1,\ldots,n\}$ be a set of non-negative iid random variables with unit mean and variance independent of $\Dscr$.  The procedure perturbs each ``layer'' of the estimation of $\Delthat$.  Let the perturbed estimates of $\balphvec$ and $\bbetavec$ be:
\begin{align*}
&(\alphhat^*_0,\balphhat^{*\sf \tiny T})\trans = \argmax_{\balphvec}\left\{ n^{-1}\sum_{i=1}^n \ell_{\pi}(\balphvec;T_i,\bX_i)G_i - \lambda_{\pi,n}\sum_{j=1}^p \abs{\alpha_j}/\abs{\alphtld_j^*}^{\gamma}\right\} \\
&(\betahat^*_0,\betahat^*_1,\bbetahat^{*\sf \tiny T}_0,\bbetahat^{*\sf \tiny T}_1)\trans = \argmax_{\bbetavec}\left\{ n^{-1}\sum_{i=1}^n \ell_{\mu}(\bbetavec;\bZ_i)G_i -\lambda_{\mu,n}\sum_{j=1}^p \abs{\beta_j}/\abs{\betatld_j^*}^{\gamma}\right\},
\end{align*}
where $\alphtld_j^*$ and $\betatld_j^*$ are perturbed initial estimates obtained from analogously perturbing its estimating equations.
The perturbed DiPS estimates are calculated by:
\begin{align*}
\pihat_k^*(\bx;\bthethat_k^*)= \frac{\sum_{j=1}^nK_h\{(\balphhat^*,\bbetahat_k^*)\trans(\bX_j-\bx)\}I(T_j=k)G_j}{\sum_{j=1}^n K_h\{(\balphhat^*,\bbetahat_k^*)\trans(\bX_j-\bx)\}G_j}, \text{ for } k=0,1.
\end{align*}
Lastly the perturbed estimator is given by $\Delthat^* = \muhat_1^* - \muhat_0^*$ where:
\begin{align*}
\muhat_k^* = \left\{ \sum_{i=1}^n \frac{I(T_i = k)}{\pihat_k^*(\bX_i;\bthethat_k^*)}G_i\right\}^{-1}\left\{ \sum_{i=1}^n \frac{I(T_i = k)Y_i}{\pihat_k^*(\bX_i;\bthethat_k^*)} G_i\right\}^{-1}, \text{ for } k=0,1.
\end{align*}
It can be shown based on arguments in \cite{jin2001simple} that the asymptotic distribution of $n^{1/2}(\Delthat - \Deltbar)$ coincides with that of $n^{1/2}(\Delthat^* - \Delthat) \mid \Dscr$.  We can thus approximate the SE of $\Delthat$ based on the empirical standard deviation or, as a robust alternative, the mean absolute deviations (MAD) of resamples $\Delthat^*$ and construct CI's using percentiles of resamples.

\section{Numerical Studies} \label{s:numerical}
\subsection{Simulation Study}
We performed extensive simulations to assess the finite sample bias and relative efficiency (RE) of $\Delthat$ (DiPS) compared to alternative estimators.  We also assessed the performance of the perturbation procedure.  
Throughout in implementing the adaptive LASSO, we used ridge regression for the initial estimators $\alphtld_j$ and $\betatld_j$ where the ridge tuning parameter chosen by minimizing the Akaike information criterion (AIC).  The adaptive LASSO tuning parameter was chosen by an extended regularized information criterion \citep{hui2015tuning}, which exhibited relatively good performance for variable selection.  We refitted models with selected covariates to reduce bias, as suggested in \cite{hui2015tuning}. The power parameter $\gamma$ was set as $\lceil \frac{2\nu}{1-\nu}\rceil+ 1$, where $\nu = log(p)/log(n)$.
A Gaussian product kernel of order $q=4$ with a plug-in bandwidth at the optimal order (see \hyperref[s:discussion]{Discussion}) was used for smoothing.  For comparison, we considered alternative standard estimators with nuisances estimated by regularization and recently developed methods for estimating ATE that incorporate variable selection: (1) IPW with $\pi_1(\bx)$ estimated by adaptive LASSO (ALAS),
(2) $\Delthat_{dr}$ with nuisances estimated by adaptive LASSO (DR-ALAS),
(3) Modification of $\Delthat_{dr}$ in which $\pi_1(\bx)$ and $\mu_k(\bx)$ are estimated by separate one-dimensional kernel smoothing of $T \sim \balphhat\trans\bX$ and $Y \sim \bbetahat_k\trans\bX$ among those assigned to $T=k$, for $k=0,1$ (DR-SIM), to allow for estimation of single index models (SIM) for $\pi_1(\bx)$ and $\mu_k(\bx)$,
(4) Outcome-adaptive LASSO (OAL) \citep{shortreed2017outcome},
(5) Group Lasso and Doubly Robust Estimation (GliDeR) \citep{koch2017covariate},
(6) Model averaged doubly-robust estimator (MADR) \citep{cefalu2017model}.
OAL and GLiDeR were implemented with default settings from code provided in the Supplementary Materials of the respective papers.
MADR was implemented using the \texttt{madr} package with $M=500$ Markov chain Monte Carlo (MCMC) iterations to reduce the computations.  Throughout the numerical studies, we specified $g_{\pi}(u)=1/(1+e^{-u})$ for $\Mscrpi$ and $g_{\mu}(u)=u$ with $\bbeta_0=\bbeta_1$ for $\Mscrmu$ as the working models.

The covariates were generated to approximate the distribution of the covariates from the statins EMR data from Section \ref{s:statin}. This was done to allow for non-elliptically distributed covariates that mimic the distribution of a real dataset.  Initially we generated $\bXtld\sim N(\bmutld,\bSigtld)$ where $\bmutld$ and $\bSigtld$ were the empirical mean and covariance matrix of the 15 covariates, which included $9$ binary, $3$ continuous, and $3$ log-transformed count variables. 
For binary variables we thresholded the corresponding components of $\bXtld$ so that its mean matched those in $\bmutld$, as in 
$I\{\widetilde{\sigma}_j^{-1}(\widetilde{X}_j-\widetilde{\mu}_{j}) > \Phi^{-1}(1-\widetilde{\mu}_j)\}$, where $\widetilde{\sigma}_j^2$ and $\mutld_j$ are the empirical variance and mean of the $j$-th covariate and $\Phi(\cdot)$ is the standard normal cumulative distribution function (CDF).  Lastly, we centered and standardized to obtain the final covariates $\bX = diag(\bSigtld^{-1/2})(\bXtld-\bmutld)$. The pairwise correlations of $\bX$ were generally low, mostly ranging between $-.2$ and $.2$ (full correlation matrix reported in Web Appendix C). For settings with $p>15$, we generated independent groups of the $15$ covariates that maintained the correlation structure within each group.

We subsequently focused on a continuous outcome, generating the data according to
$T\mid \bX \sim Ber\{ \pi_1(\bX)\}$ and $Y\mid \bX,T \sim N\{ \mu_T(\bX), 10^2\}$.
The simulations varied over scenarios where working models were correct or misspecified in which the true $\pi_1(\bx)$ and $\mu_k(\bx)$ are:
\begin{align*}
&\text{Both correct: } \pi_1(\bx) = g_{\pi}(.2+\balph\trans\bx), \quad \mu_k(\bx) = k + \bbeta\trans\bx \\
&\text{Misspecified $\mu_k(\bx)$: } \pi_1(\bx) = g_{\pi}(.2+\balph\trans\bx), \quad \mu_k(\bx) = k + \bbeta_{[1]}\trans\bx (1 + \bbeta_{[2]}\trans\bx) + k \bzeta\trans\bx \\
&\text{Misspecified $\pi_k(\bx)$: } \pi_1(\bx) = g_{\pi}\left\{.2+\balph_{[1]}\trans\bx(1+\balph_{[2]}\trans\bx)\right\}, \quad \mu_k(\bx) = k + \bbeta\trans\bx,
\end{align*}
where the coefficients are $\balph = .01\cdot(1,2,3,4,5,6,\bzero_{3},3,7,0,7,-5,0,\bzero_{p-15})\trans$, $\balph_{[1]} = \balph, \balph_{[2]} = (.02,.06,.02,.02,-.1,.02,\bzero_3,-.14,.1,0,-.1,.14,0,\bzero_{p-15})\trans$, $\bzeta = (\bzero_6,1,\bzero_3,1,\bzero_2,1,0,\bzero_{p-15})\trans$, $\bbeta=(\mathbf{0}_3,1,.5,.25,.125,.0625,.03125,0,1,.5,0,.25,.125,\bzero_{p-15})\trans$, $\bbeta_{[1]} = (\bzero_3,.5,0,.5,\bone_3,0,1,$ $2,0,1,2,\bzero_{p-15})\trans$, $\bbeta_{[2]} = (\bzero_3,-1.5,.75,-1.5,\bzero_3,0,-1.5,-.75,0,1.5,.75,\bzero_{p-15})\trans$,
and $\mathbf{a}_{m}$ denotes a $1\times m$ vector that has all its elements as $a$.  
For the misspecified scenarios, either $\mu_k(\bx)$ or $\pi_1(\bx)$ is a double-index model that includes both linear terms in $\bx$ and quadratic and two-way interaction terms among $\bx$ that are omitted by linear working models.  In the misspecified $\mu_k(\bx)$ case, the second index $\bbeta_{[2]}\trans\bx$ has some correlation with the PS index $\balph\trans\bx$, modeling a situation in which there exist common latent factors not fully captured by a linear outcome model.  The outcome model also includes an interaction term between $\bx$ and treatment to allow for treatment effect heterogeneity.
The parameters are set such that there are $5$ covariates belonging to each of $\Ascrpitru \cap \Ascrmutru$ (i.e. confounders), $\Ascrpitru \cap \Ascrmutru^c$ (instruments), and $\Ascrpitru^c \cap \Ascrmutru$ (pure prognostic) when $p=15$. The simulations were run for $R=1,000$ repetitions.

Table \ref{tab:bias} presents the bias and root mean square error (RMSE) for $n=500, 5,000$ when $p=15$.  Among the three scenarios considered, the bias for DiPS is small relative to the RMSE and generally diminishes towards zero as $n$ increases, verifying its double-robustness.  There remains some minor bias that persists when $n=5,000$ for DiPS that is likely a result of bias from the smoothing, as DR-SIM also incurs similar residual bias.  IPW-ALAS and OAL are singly-robust and the bias does not necessary diminish under the misspecified $\pi_1(\bx)$ scenario, although their bias is also minor in the setting considered.  MADR exhibited substantial bias under misspecified $\mu_k(\bx)$ scenario that persisted in large samples, possibly due to selecting out confounders with weak outcome associations in its emphasis on selection of prognostic covariates.  The results for bias for $p=50,100$ exhibited similar patterns.

\begin{table}[htbp]
  \scalebox{1}{
    \begin{tabular}{rrcccccccc}
    \toprule
          &       & \multicolumn{2}{c}{Both Correct} & \multicolumn{2}{c}{Misspecified $\mu_k(\bx)$} & \multicolumn{2}{c}{Misspecified $\pi_1(\bx)$} \\
    \multicolumn{1}{c}{\textbf{Size}} & \multicolumn{1}{c}{\textbf{Estimator}} & \textbf{Bias} & \textbf{RMSE} & \textbf{Bias} & \textbf{RMSE} & \textbf{Bias} & \textbf{RMSE} \\ \midrule
    \multicolumn{1}{c}{\multirow{8}[2]{*}{n=500}} & \multicolumn{1}{c}{IPW-ALAS} & 0.029 & 0.350 & 0.074 & 1.754 & 0.023 & 0.294 \\
    \multicolumn{1}{c}{} & \multicolumn{1}{c}{DR-ALAS} & 0.002 & 0.330 & 0.029 & 1.684 & -0.001 & 0.285 \\
    \multicolumn{1}{c}{} & \multicolumn{1}{c}{DR-SIM} & -0.021 & 0.315 & 0.127 & 1.495 & 0.013 & 0.287 \\
    \multicolumn{1}{c}{} & \multicolumn{1}{c}{OAL} & 0.008 & 0.321 & 0.074 & 1.484 & 0.001 & 0.284 \\
    \multicolumn{1}{c}{} & \multicolumn{1}{c}{GLiDeR} & 0.001 & 0.299 & 0.087 & 1.238 & 0.006 & 0.282 \\
    \multicolumn{1}{c}{} & \multicolumn{1}{c}{MADR} & 0.022 & 0.300 & 0.172 & 1.247 & 0.008 & 0.282 \\
    \multicolumn{1}{c}{} & \multicolumn{1}{c}{DiPS} & -0.017 & 0.319 & 0.101 & 1.193 & 0.013 & 0.293 \\ \hline
    \multicolumn{1}{c}{\multirow{8}[2]{*}{n=5,000}} & \multicolumn{1}{c}{IPW-ALAS} & 0.001 & 0.111 & -0.002 & 0.588 & 0.033 & 0.108 \\
    \multicolumn{1}{c}{} & \multicolumn{1}{c}{DR-ALAS} & -0.003 & 0.106 & -0.014 & 0.564 & -0.008 & 0.089 \\
    \multicolumn{1}{c}{} & \multicolumn{1}{c}{DR-SIM} & -0.012 & 0.103 & 0.029 & 0.516 & -0.004 & 0.089 \\
    \multicolumn{1}{c}{} & \multicolumn{1}{c}{OAL} & -0.002 & 0.105 & 0.000 & 0.527 & -0.007 & 0.089 \\
    \multicolumn{1}{c}{} & \multicolumn{1}{c}{GLiDeR} & -0.001 & 0.098 & 0.034 & 0.413 & -0.006 & 0.088 \\
    \multicolumn{1}{c}{} & \multicolumn{1}{c}{MADR} & 0.000 & 0.099 & 0.124 & 0.418 & -0.008 & 0.089 \\
    \multicolumn{1}{c}{} & \multicolumn{1}{c}{DiPS} & -0.016 & 0.106 & 0.041 & 0.349 & -0.003 & 0.091 \\
    \bottomrule
    \end{tabular}
    }
\vspace{1.5em}
\caption{Bias and RMSE of estimators by $n$ and model specification scenario for $p=15$.}
\label{tab:bias}
\end{table}

Figure \ref{fig:RE} presents the RE under the different scenarios for $n=500, 5,000$ and $p=15,50,100$.  RE was defined as the ratio of the mean square error (MSE) for DR-ALAS relative to that of each estimator, with RE $>1$ indicating greater efficiency compared to DR-ALAS.  Under the ``both correct'' scenario many of the estimators generally exhibit similar efficiency, which can be expected since many are variants of the usual DR estimator and reach the semiparametric efficiency bound.  When $n=500$ and $p=60$, there are some slightly greater differences, with GliDeR and MADR leading in efficiency gains, possibly due to differences in the variable selection performance. These differences in efficiency appear to temper when sample size is increased for $n=5,000$ and $p=60$. 
The results are similar in the ``misspecified $\pi_1(\bx)$'' scenario, where most estimators exhibited similar efficiency.

In the ``misspecified $\mu_k(\bx)$'' scenario, DiPS achieves over 70\% efficiency gain compared to GliDeR and MADR and over 140\% compared to DR-SIM in the large sample setting when $n=5,000$ and $p=15$. This suggests that expected efficiency gains under misspecified outcome models due to the results of Section \ref{ss:eff} can be substantial.  Even if $\pi_1(\bx)$ and $\mu_k(\bx)$ are estimated under a SIM, there are still gains from DiPS when the PS direction $\balphbar\trans\bX$ is informative of the mean outcome beyond $\bbetabar\trans_k\bX$. These gains diminish when $p$ is larger relative to $n$, possibly due to imperfect variable selection. Again GLiDeR and MADR achieve the highest efficiency when $n=500$ and $p=60$, notwithstanding the substantial bias of MADR.  Thus the performance of DiPS using adaptive LASSO can be somewhat compromised when $p$ is very large relative to $n$ and the variable selection performance is sub-optimal.

\begin{figure}[h!]
\centering
\centerline{(a) Both correct}
\includegraphics[scale=.425]{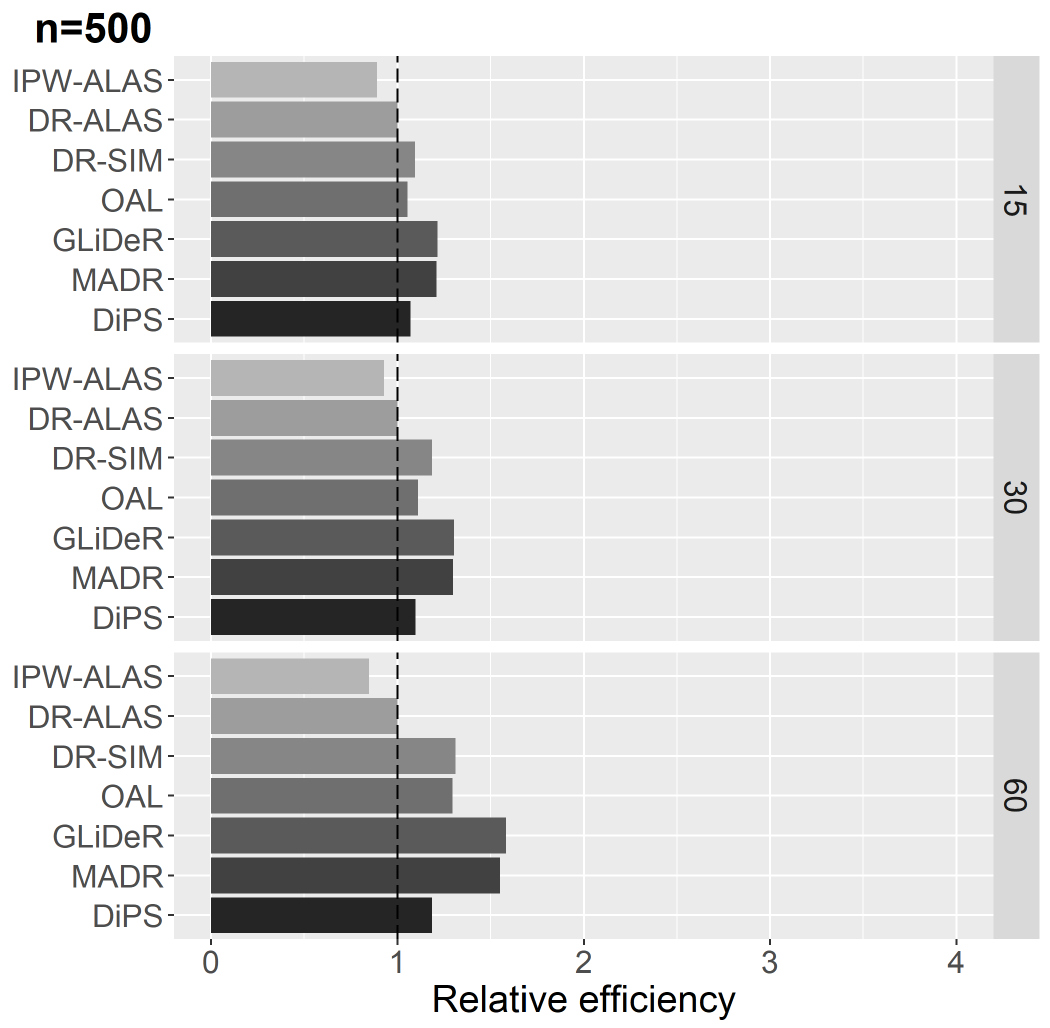} 
\includegraphics[scale=.425]{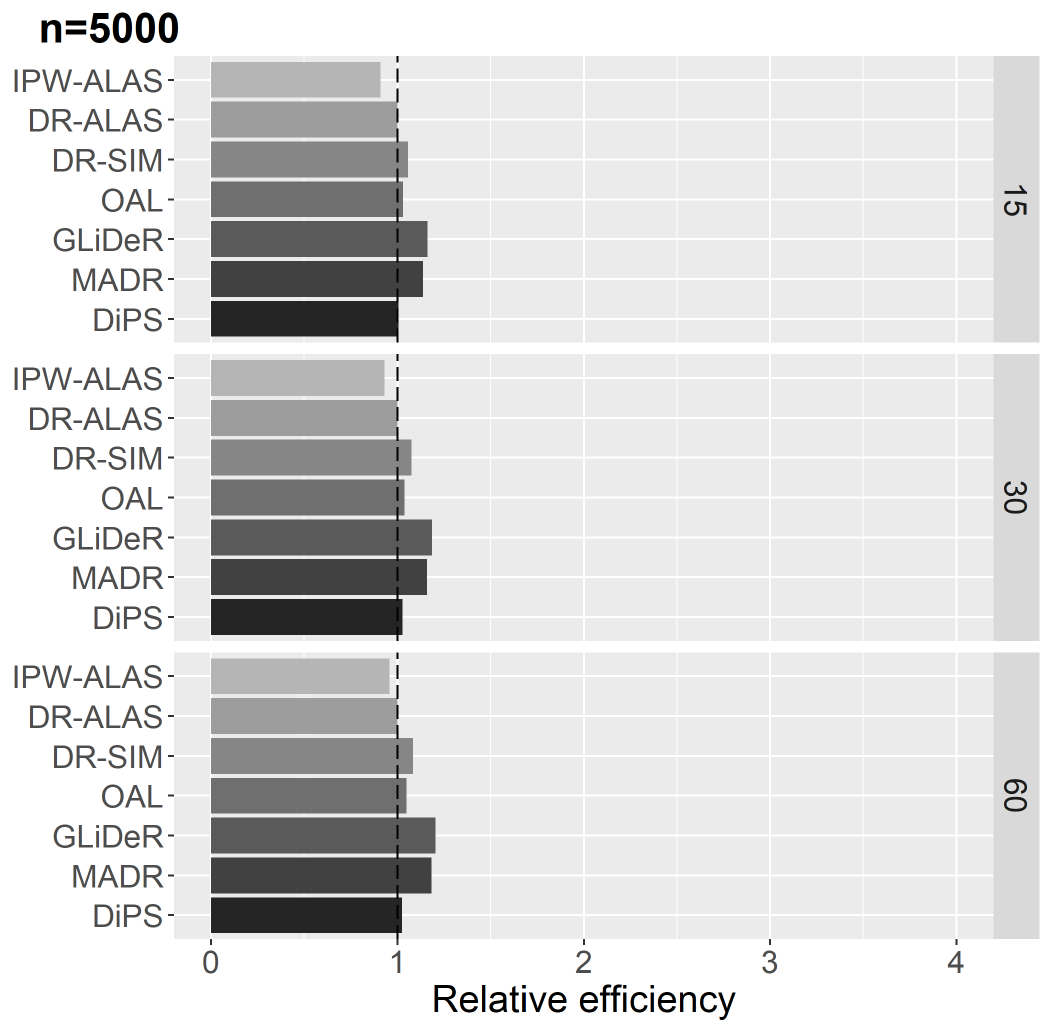}

\centerline{(b) Misspecified $\mu_k(\bx)$}
\includegraphics[scale=.425]{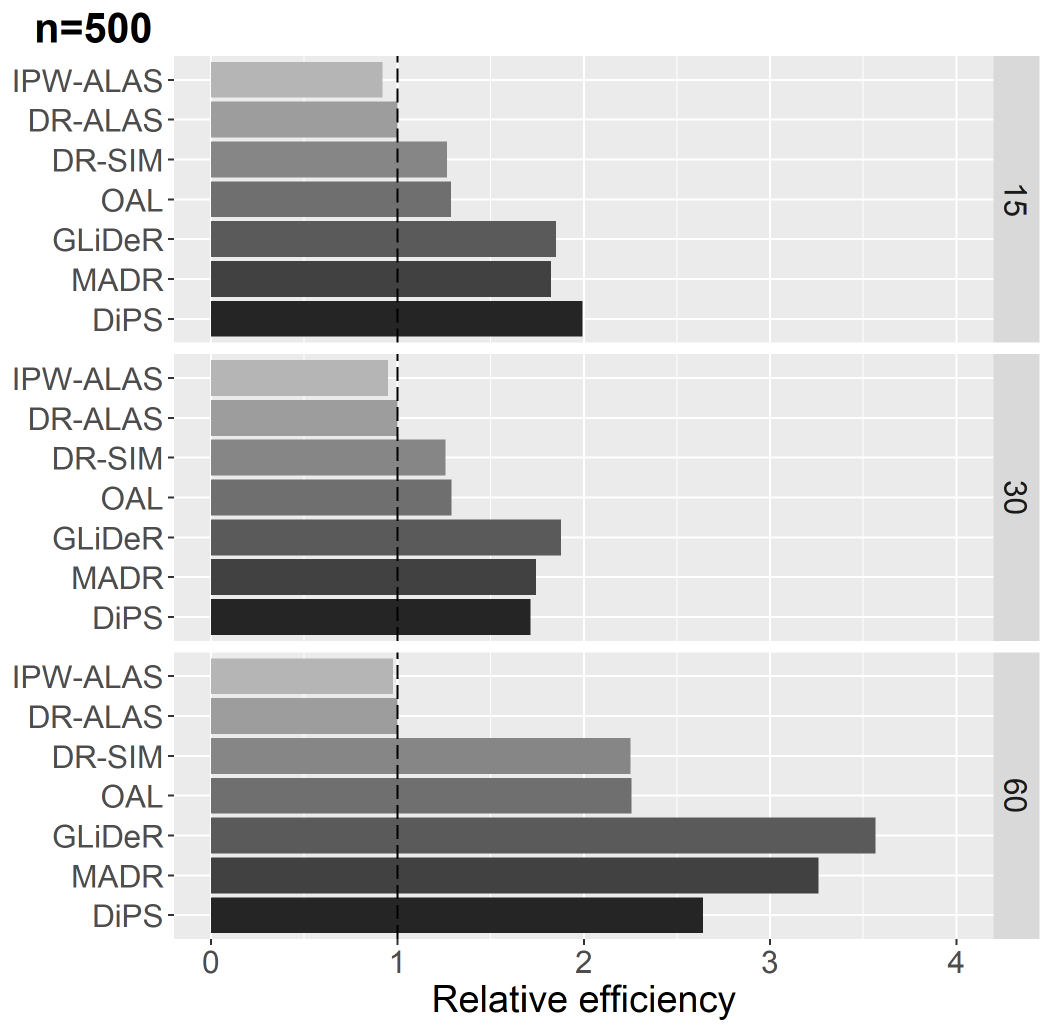}
\includegraphics[scale=.425]{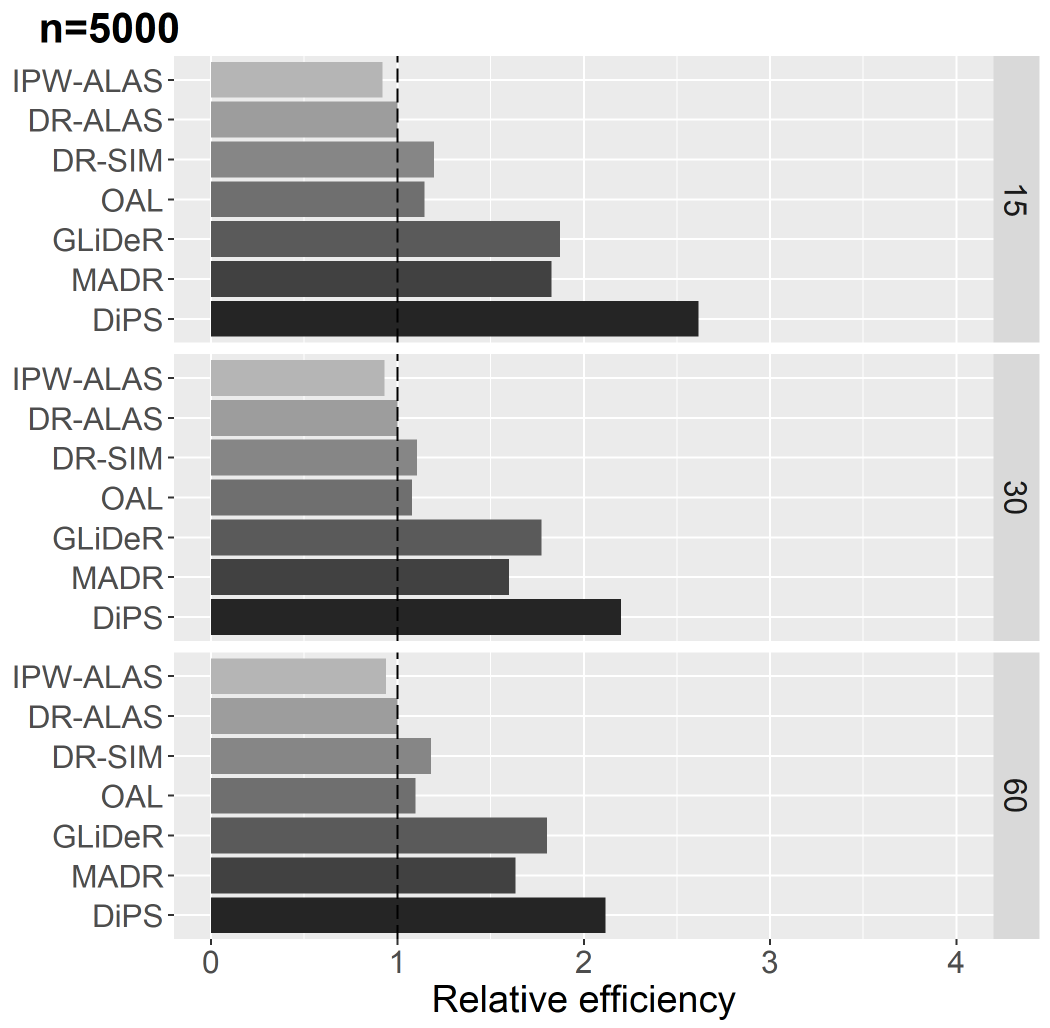}

\centerline{(c) Misspecified $\pi_1(\bx)$}
\includegraphics[scale=.425]{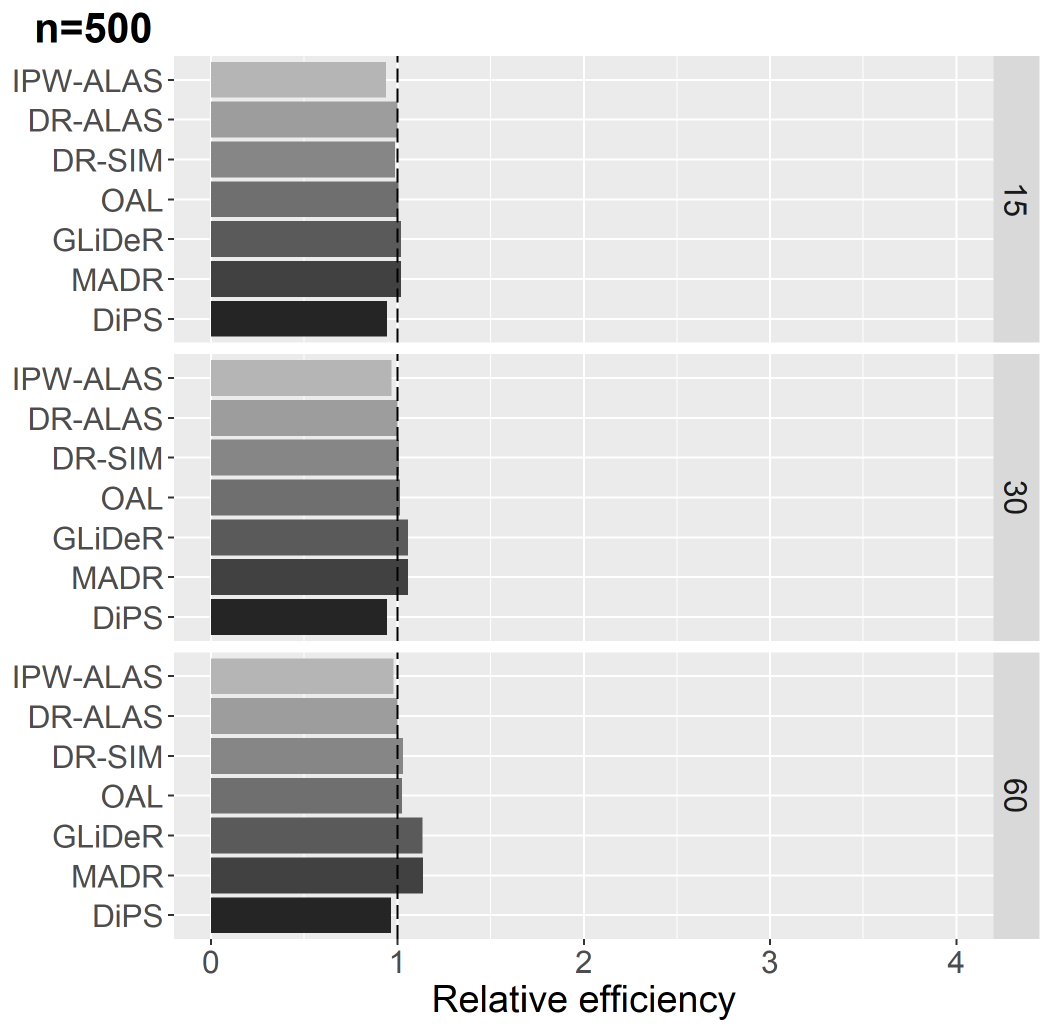}
\includegraphics[scale=.425]{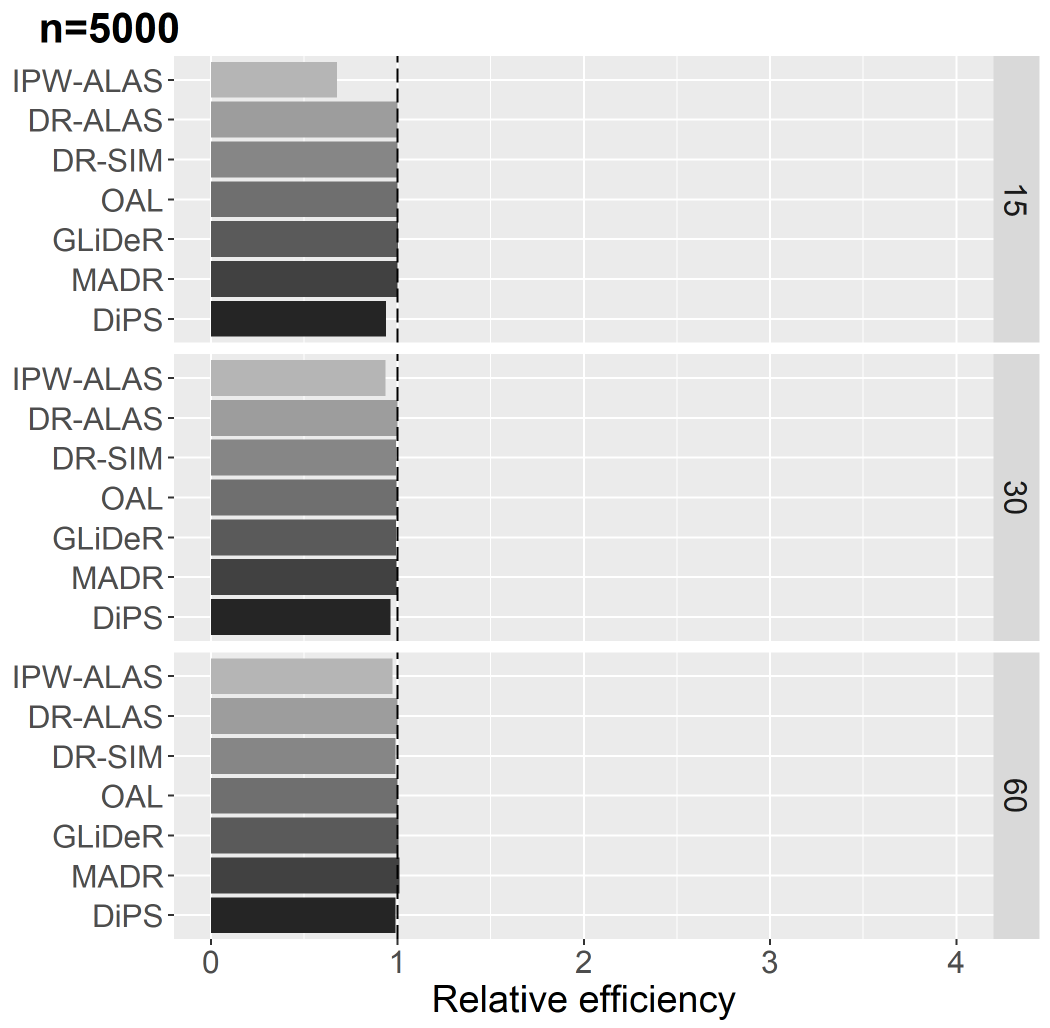}

\caption{\baselineskip=1pt RE relative to DR-ALAS by $n$, $p$, and specification scenario.
}
\label{fig:RE}%
\end{figure}

Table \ref{tab:cov} presents the performance of perturbation for DiPS when $p=15,30$ under correct working models.  SEs for DiPS were estimated using the MAD.  The empirical SEs (Emp SE), calculated from the sample standard deviations of $\Delthat$ over the simulation repetitions, were generally similar to the average of the SE estimates over the repetitions (ASE), despite some  overestimation up to 2-15\% of the Emp SE.
The coverage of the percentile CI's (Cover) were close to nominal 95\% levels but tended to be somewhat conservative.

\begin{table}[htbp]
\begin{tabular}{ccccc}
    \toprule
    \textbf{p} & \textbf{n} & \textbf{Emp SE} & \textbf{ASE} & \textbf{Cover} \\
    \midrule
    15    & 500   & 0.350 & 0.362 & 0.966 \\
    15    & 2500  & 0.151 & 0.167 & 0.970 \\
    15    & 5000  & 0.108 & 0.119 & 0.965 \\ \hline
    30    & 500   & 0.348 & 0.356 & 0.961 \\
    30    & 2500  & 0.150 & 0.167 & 0.975 \\
    30    & 5000  & 0.103 & 0.119 & 0.973 \\
    \bottomrule
\end{tabular}
\vspace{1.5em}
\caption{Perturbation performance under correctly specified models. Emp SE: empirical standard error over simulations, ASE: average of standard error estimates based on MAD over perturbations, Cover: Coverage of 95\% percentile intervals.}
\label{tab:cov}
\end{table}

\subsection{Data Example: Effect of Statins on Colorectal Cancer Risk in EMRs} \label{s:statin}
We applied DiPS to assess the effect of statins, a medication for lowering cholesterol levels, on the risk of colorectal cancer (CRC) among patients with inflammatory bowel disease (IBD) identified using data from EMRs of Partners Healthcare.  Previous studies have suggested that statins have a protective effect on CRC, but few studies have considered the effect specifically among IBD patients.  The EMR cohort consisted of $n=10,817$ IBD patients, including 1,375 statin users.  CRC status and statin use were ascertained by the presence of ICD9 diagnosis and prescription codes.  We adjusted for $p=15$ covariates as potential confounders, including age, gender, race, smoking status, indication of elevated inflammatory markers, examination with colonoscopy, use of biologics and immunomodulators, subtypes of IBD, disease duration, and presence of primary sclerosing cholangitis (PSC).  

For the working model $\Mscrmu$, we specified $g_{\mu}(u)=1/(1+e^{-u})$ to accomodate the binary outcome.  SEs for other estimators were obtained from the MAD over bootstrap resamples.
CIs were calculated from percentile intervals. We also calculated a two-sided p-value from a Wald test for the null that statins have no effect, using the point and SE estimates for each estimator.  The unadjusted estimate (None) based on difference in means by statins use was also calculated as a reference.  The left side of Table \ref{tab:data} shows that, without adjustment, the naive risk difference is estimated to be -0.8\% with a SE of 0.4\%.  The other methods estimated that statins had a protective effect ranging from around -1\% to -3\% after adjustment for covariates.  DiPS and DR-SIM were the most efficient estimators, with DiPS achieving estimated variance that ranged 34\% to 61\% lower than that of other estimators.

\begin{table}[htbp]
  \centering
  \scalebox{0.85}{
    \begin{tabular}{ccccccccc}
    \toprule
    \multicolumn{5}{c}{IBD EMR Study}     & \multicolumn{4}{c}{FOS} \\
          & \textbf{Est} & \textbf{SE} & \textbf{95\% CI} & \textbf{p-val} & \textbf{Est} & \textbf{SE} & \textbf{95\% CI} & \textbf{p-val} \\ \hline
    None  & -0.008 & 0.004 & (-0.017, 0) & 0.047 & 0.180 & 0.058 & (0.065, 0.298) & 0.002 \\
    IPW-ALAS & -0.022 & 0.004 & (-0.031, -0.015) & $<$0.001 & 0.182 & 0.063 & (0.053, 0.307) & 0.004 \\
    DR-ALAS & -0.020 & 0.005 & (-0.029, -0.012) & $<$0.001 & 0.140 & 0.063 & (0.031, 0.277) & 0.026 \\
    DR-SIM & -0.023 & 0.003 & (-0.029, -0.018) & $<$0.001 & 0.143 & 0.057 & (0.044, 0.257) & 0.013 \\
    OAL   & -0.008 & 0.004 & (-0.017, 0) & 0.048 & 0.175 & 0.061 & (0.062, 0.301) & 0.004 \\
    GLiDeR & -0.031 & 0.005 & (-0.04, -0.022) & $<$0.001 & 0.147 & 0.058 & (0.045, 0.258) & 0.012 \\
    MADR  & -0.030 & 0.005 & (-0.04, -0.021) & $<$0.001 & 0.149 & 0.056 & (0.037, 0.258) & 0.008 \\
    DiPS  & -0.024 & 0.003 & (-0.029, -0.017) & $<$0.001 & 0.141 & 0.058 & (0.039, 0.276) & 0.015 \\
    \bottomrule
    \end{tabular}
    }
  \vspace{1.5em}
\caption{Data example on the effect of statins on CRC risk in EMR data and the effect of smoking on logCRP in FOS data. Est: Point estimate, SE: estimated SE, 95\% CI: confidence interval, p-val: p-value from Wald test of no effect.
}
\label{tab:data}
\end{table}

\subsection{Data Example: Framingham Offspring Study}
The Framingham Offspring Study (FOS) is a cohort study initiated in 1971 that enrolled 5,124 adult children and spouses of the original Framingham Heart Study.  The study collected data over time on participants' medical history, physician examination, and laboratory tests to examine epidemiological and genetic risk factors of cardiovascular disease (CVD).  A subset of the FOS participants also have their genotype from the Affymetrix 500K single-nucleotide polymorphism (SNP) array available through the Framingham SNP Health Association Resource (SHARe) on dbGaP. We assessed the effect of smoking on C-reactive protein (CRP) levels, an inflammation marker highly predictive of CVD risk, while adjusting for potential confounders including gender, age, diabetes status, use of hypertensive medication, systolic and diastolic blood pressure measurements, and HDL and total cholesterol measurements, as well as a large number of SNPs in gene regions previously reported to be associated with inflammation or obesity.  While the inflmmation-related SNPs are not likely to impact smoking, we include them as prognostic covariates for efficiency.  The analysis includes $n=1,892$ individuals with available information on the CRP and the $p=121$ covariates, of which 113 were SNPs.

Since CRP is heavily skewed, we applied a log transformation so that the linear regression model in $\Mscrmu$ better fits the data.  SEs, CIs, and p-values were calculated in the same way as above.  The right side of Table \ref{tab:data} shows that different methods agree that smoking significantly increases logCRP.  In general, point estimates tended to attenuate after adjusting for covariates since smokers are likely to have other characteristics that increase inflammation.  DiPS, DR-SIM, and MADR were among the most efficient, though efficiency gains are tempered in this setting with larger $p$ relative to $n$.

\section{Discussion} \label{s:discussion}
In this paper we developed a novel IPW estimator for the ATE that accommodates data-driven variable selection through regularized regression.  The estimator retains double-robustness and is locally semiparametric efficient when $\nu=0$.  By calibrating the initial PS through smoothing, additional gains in efficiency can potentially be achieved in large samples under misspecification of the working outcome model.

In numerical studies, we used the extended regularized information criterion \citep{hui2015tuning} to tune adaptive LASSO, which maintains selection consistency when $log(p)/log(n) \to\nu$, for $\nu\in [0,1)$.  Other criteria such as cross-validation can also be used and may exhibit better performance in some cases.  To obtain a suitable bandwidth $h$, the bandwidth must be selected such that the dominating errors in the influence function, which are of order $O_p(n^{1/2}h^q + n^{-1/2}h^{-2})$, converges to $0$.  This is satisfied for $h= O(n^{-\alpha})$ for $\alpha \in (\frac{1}{2q},\frac{1}{4})$.  The optimal bandwidth $h^*$ is one that balances these bias and variance terms and is of order $h^*=O(n^{-1/(q+2)})$.  In practice we use a plug-in estimator $\hhat^* = \sighat n^{-1/(q+2)}$, where $\sighat$ is the sample standard deviation of either $\balphhat\trans\bX_i$ or $\bbetahat_k\trans\bX_i$, possibly after applying a monotonic transformation.  Cross-validation can also be used to select the the smoothing bandwidth.

The adaptive LASSO estimators $\balphhat$ and $\bbetahat_k$ are not uniformly root-$n$ consistent when the penalty is tuned to achieve consistent model selection \citep{potscher2009distribution}, and its oracle properties derived under fixed parameter asymptotics may fail to capture essential features of finite-sample distributions.  For example, they are not root-$n$ consistent when the true parameters are of order $O(n^{-1/2})$, if the true signals are relatively weak. 
The importance of uniform inference also been recently highlighted for treatment effect estimation in high-dimensional settings \citep{belloni2013inference,farrell2015robust}. It would be of interest to consider alternative variable selection approaches beyond those grounded in oracle properties to achieve uniform inference.
Another limitation of relying on adaptive LASSO is that when $p$ is large so that $\nu$ is large, a large power parameter $\gamma$ would be required to maintain the oracle properties, leading to an unstable penalty and poor finite sample performance.  
It would be of interest to consider modifications of the proposed procedure to accommodate high-dimensional settings with $p \gg n $ and more general sparsity assumptions in future work.

\backmatter

\section*{Acknowledgements}

The authors would like to thank the editor, associate editor, and two referees for their insightful feedback and suggestions.
Most of this work was done when the first author was a graduate student at Harvard University.  This work was supported by National Institutes of Health grants T32CA009337 and R01HL089778.
The views expressed in this article are those of the authors and do not necessarily reflect the views of the Department of Veterans Affairs.\vspace*{-8pt}

\bibliographystyle{biom} 
\bibliography{mybibilo}

\section*{Supporting Information}

Web Appendices referenced in Sections \ref{s:method}, \ref{s:asymptotics}, and \ref{s:numerical} as well as the R code 
implementing the procedure are available with this paper at the Biometrics website on Wiley Online Library.

\processdelayedfloats

\newpage

\appendix

\section*{Supporting Information for ``Estimating Average Treatment Effects with a Double-Index Propensity Score'' by David Cheng, Abhishek Chkrabortty, Ashwin N. Ananthakrishnan, and Tianxi Cai.}

\newpage

These supplementary materials describe the requisite regularity conditions (Web Appendix A) and provides derivations of the two theorems in the main text (Web Appendix B).  Web Appendix C reports the correlation matrix used for the covariates in the simulations.   
Web Appendix D reports simulation results on the proportion of observations with negative values of DiPS across different scenarios.

The following notations will facilitate the derivations.  Throughout this Web Appendix, we suppress the $k$ in $\bbeta_k$, $\bbetahat_k$, $\bbetabar_k$, $\bthet_k$, $\bthethat_k$, and $\bthetbar_k$ for ease of notation but implicitly understand these quantities to be defined with respect to treatment $k=0,1$ in general. Let $\bSbar = (\balphbar\trans\bX,\bbetabar\trans\bX)\trans$ be $\bX$ in the directions of $\balphbar$ and $\bbetabar$, regardless of the adequacy of the working models.  Let the true density of $\bSbar$ at $\bs$ be $f(\bs)$, the propensity score given $\bSbar=\bs$ for $k=0,1$ be $\pi_k(\bs)=P(T=k\mid \bSbar=\bs)$, and $l_k(\bs) = \pi_k(\bs)f(\bs)$.  Given a $\bx \in \mathbb{R}^{p}$, $\balph,\bbeta\in \mathbb{R}^p$, for $\bthet = (\balph\trans,\bbeta\trans)\trans$, let:
\begin{align}
\pihat_k(\bx;\bthet) = \pihat_k(\bx;\balph,\bbeta) = \frac{\lhat_k(\bx;\bthet)}{\fhat(\bx;\bthet)} = \frac{\sum_{j=1}K_h\{(\balph,\bbeta)\trans(\bX_j-\bx)\}I(T_i=k)}{\sum_{j=1}K_h\{(\balph,\bbeta)\trans(\bX_j-\bx)\}}.
\end{align}

For a $p$ length random vector $\bV$, let $\bV^{\dagger} = (\bV,\bzero_{p})$ be the $p\times 2$ matrix of the vector augmented by column of zeros on the right and $\bV^{\ddagger} = (\bzero_{p},\bV)$ similarly by a column of zeros on the left.  For any two vectors $\bV_i$ and $\bV_j$, let $\bV_{ji} = \bV_j - \bV_i$.  Let $K(\bu)$ be a bivariate symmetric kernel function of order $q>2$, with a finite $q$-th moment.  Let $\Kdot(\bu) = \partial K(\bu)/\partial\bu$ and $\Kdot_h(\bv) = h^{-3}\Kdot(\bv/h)$.  For any vector $\bV$ of length $p$ and $\Ascr \subseteq \{ 1,2,\ldots,p\}$, with $\abs{\Ascr} = p_0$, let $\bV_{\Ascr}$ denote a $p_0$-length vector that is $\bV$ restricted to coordinates indexed in $\Ascr$.  Similarly, let $\bV\trans_{\Ascr}$ denote $\bV\trans$ restricted to coordinates indexed in $\Ascr$.

\section*{Web Appendix A: Regularity Conditions}
\label{wa:assump}
(i) $K(\bu)$ is a bivariate kernel function of order $q > 2$, with a finite $q$-th moment. (ii) $K(\bu)$ is bounded and continuously differentiable with a compact support. (iii) $\Kdot(\bu)$ is bounded, integrable, and Lipshitz continuous. (iv) $\Xscr$ is compact. (v) $f(\bs)$ is bounded and bounded away from $0$ over its support. (vi) $f(\bs)$, $\pi_k(\bs)$, and $E(Y|\bSbar=\bs,T=k)$ for $k=0,1$ are $q$-times continuously differentiable. (vii) $E(\bX | \bSbar =\bs)$, $E(\bX|\bSbar=\bs,T=k)$, and $E(\bX Y | \bSbar = \bs, T=k)$ are continuously differentiable for $k=0,1$. (viii) There exists $0<k_1 < k_2 <\infty$ such that the minimum and maximum eigenvalues of $\frac{1}{n}\sum_{i=1}^n\bX_i\bX_i\trans$ around bounded below by $k_1$ and above by $k_{2}$. (ix) $\Thetpi$ and $\Thetmu$ are compact. (x) For all $u\in\mathbb{R}$, $1/M \leq g_{\mu}'(u) \leq M$ and $\abs{g_{\mu}''(u)} \leq M$ and for some $0<M<\infty$.

\section*{Web Appendix B: Derivations of Theorems 1 and 2} \label{wa:proof}

\subsection*{Supporting Lemmas} \label{wa:supplem}
Lemma \ref{lm:mnord} identifies the stochastic order of a standardized mean when the variance of the observations is of a known order.  It will be useful for controlling certain terms that will emerge in the expansion.  Lemma \ref{lm:unicv} shows the uniform convergence rate for kernel smoothing when $\balph$ and $\bbeta$ are fixed, which is a fundamental result used in our approach. Lemma \ref{lm:avgrd} simplifies the average of the gradients of the average of terms that are inversely weighted by the calibrated PS evaluated at the least false parameters.  These terms appear repeatedly in subsequent derivations.
\begin{lemma}\label{lm:mnord}
Let $\{ X_{i,n}\}$ be a triangular array such that $X_{1,n},\ldots,X_{n,n}$ are iid for each $n\in \mathbb{N}$.  Suppose that $\sigma^2_n = Var(X_{i,n}) = O(c_n^2)$, where $c_n$ is some positive sequence.  Then:
\begin{align}
n^{1/2}\abs{\bar{X}_n - \mu_n} \leq O_p(c_n),
\end{align}
where $\bar{X}_n = n^{-1}\sum_{i=1}^n X_{i,n}$ and $\mu_n = E(X_{i,n})$.
\end{lemma}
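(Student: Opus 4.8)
The plan is to obtain this as an immediate consequence of Chebyshev's inequality after centering and rescaling. First I would exploit the i.i.d.\ structure within each row of the triangular array: for fixed $n$, independence of $X_{1,n},\ldots,X_{n,n}$ gives $\mathrm{Var}(\bar X_n) = \sigma_n^2/n$, so the rescaled quantity $Z_n := n^{1/2}(\bar X_n - \mu_n)$ has mean zero and $\mathrm{Var}(Z_n) = \sigma_n^2$. The hypothesis $\sigma_n^2 = O(c_n^2)$ then supplies a constant $C < \infty$ and an index $N$ such that $\sigma_n^2 \le C c_n^2$ for all $n \ge N$.

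The second and final step is the tail bound. Given $\varepsilon > 0$, set $M = (C/\varepsilon)^{1/2}$; then for every $n \ge N$, Chebyshev's inequality applied to $Z_n$ yields
\begin{align*}
\P\!\left( \abs{Z_n} > M c_n \right) \;\le\; \frac{\mathrm{Var}(Z_n)}{M^2 c_n^2} \;=\; \frac{\sigma_n^2}{M^2 c_n^2} \;\le\; \frac{C}{M^2} \;=\; \varepsilon .
\end{align*}
Hence $\limsup_{n\to\infty} \P(n^{1/2}\abs{\bar X_n - \mu_n} > M c_n) \le \varepsilon$, and since $\varepsilon$ was arbitrary this is precisely the assertion $n^{1/2}\abs{\bar X_n - \mu_n} = O_p(c_n)$.

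There is no genuine obstacle here; the result is essentially a bookkeeping lemma whose only purpose is to record the stochastic order of a standardized sample mean when the per-observation variance is of a controlled order. The one subtlety worth flagging is that $c_n$ is assumed merely positive, so one cannot speak of $c_n^{-1}$ "converging" to anything; but the argument uses only the dimensionless ratio $\sigma_n^2/c_n^2$, which the hypothesis bounds, so the tail estimate holds uniformly over $n \ge N$ and no further care is needed.
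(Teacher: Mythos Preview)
Your proof is correct and follows essentially the same approach as the paper: apply Chebyshev's inequality to $n^{1/2}(\bar X_n-\mu_n)$, whose variance equals $\sigma_n^2$, and use the hypothesis $\sigma_n^2=O(c_n^2)$ to bound the tail probability uniformly. The only cosmetic difference is that the paper writes the Chebyshev bound directly for $n^{1/2}|\bar X_n-\mu_n|/c_n$ and takes $M=\sup_n \sigma_n^2/c_n^2$, whereas you more carefully invoke an index $N$ beyond which the ratio is bounded; the substance is identical.
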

\begin{proof}
By Chebyshev's inequality, for any $k > 0$:
\begin{align}
\P\left( n^{1/2}\abs{\bar{X}_n -\mu_n}/c_n \geq k \right) \leq \frac{\sigma_n^2}{c_n^2 k^2}.
\end{align}
Let $M = \sup_{n\in\mathbb{N}} \sigma^2_n/c_n^2$.  For any $\epsilon>0$, the desired result is obtained by taking $k = (M/\epsilon)^{1/2}$.
\end{proof}

\begin{lemma}\label{lm:unicv}
The uniform convergence rate for two-dimensional smoothing over $\bX$ in the directions $\balphbar$ and $\bbetabar$ is given by:
\begin{align}
\sup_{\bx} \norm{\pihat_k(\bx;\bthetbar) - \pi_k(\bx;\bthetbar)} = O_p(a_n),
\end{align}
where $\bthetbar = (\balphbar\trans,\bbetabar\trans)\trans$, $\pi_k(\bx;\bthetbar) = P(T=k\mid \balphbar\trans\bX=\balphbar\trans\bx,\bbetabar\trans\bX=\bbetabar\trans\bx)$, and:
\begin{align}
a_n = h^q + \{ log(n) / (nh^2)\}^{1/2}.
\end{align}
\begin{proof}
Smoothing over $\bX$ in the directions of $\balphbar$ and $\bbetabar$ is the same as a two-dimensional kernel smoothing since $\balphbar$ and $\bbetabar$ are fixed.  See, for example, \cite{hansen2008uniform} for the derivation of uniform convergence rates for $d$-dimensional smoothing.
\end{proof}
\end{lemma}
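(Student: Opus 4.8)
The plan is to reduce the problem to a standard two-dimensional Nadaraya--Watson smoothing problem and then combine the classical bias--variance decomposition with a uniform (empirical-process) control of the stochastic error. The crucial simplification is that $\balphbar$ and $\bbetabar$ are \emph{fixed} limiting values, not estimated quantities. Setting $\bSbar_j = (\balphbar,\bbetabar)\trans\bX_j$ and $\bs = (\balphbar,\bbetabar)\trans\bx$, the estimator $\pihat_k(\bx;\bthetbar) = \lhat_k(\bx;\bthetbar)/\fhat(\bx;\bthetbar)$ is, viewed as a function of $\bs$, exactly a two-dimensional kernel estimator of $\pi_k(\bs) = l_k(\bs)/f(\bs)$ with $l_k(\bs) = \pi_k(\bs)f(\bs)$. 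Using the identity $\pihat_k - \pi_k = \{(\lhat_k - l_k) - \pi_k(\fhat - f)\}/\fhat$ together with $\pi_k(\bs)\in[0,1]$, I would first reduce everything to the ratio bound
\begin{equation*}
\sup_{\bs}\abs{\pihat_k(\bs) - \pi_k(\bs)} \le \frac{\sup_{\bs}\abs{\lhat_k(\bs) - l_k(\bs)} + \sup_{\bs}\abs{\fhat(\bs) - f(\bs)}}{\inf_{\bs}\fhat(\bs)}.
\end{equation*}
Since $\Xscr$ is compact by condition (iv) and $f$ is bounded away from $0$ by condition (v), once $\sup_{\bs}\abs{\fhat - f}$ and $\sup_{\bs}\abs{\lhat_k - l_k}$ are shown to be $O_p(a_n)$, the fact that $\inf_{\bs}\fhat(\bs) \to \inf_{\bs} f(\bs) > 0$ with probability tending to $1$ completes the combination.

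For each of $\fhat$ and $\lhat_k$ I would use the split $\fhat - f = (\fhat - \E\fhat) + (\E\fhat - f)$. The deterministic bias $\E\fhat(\bs) - f(\bs)$ is handled by the change of variables $\bu = (\balphbar,\bbetabar)\trans(\bX - \bx)/h$ and a $q$-th order Taylor expansion: the order-$q$ kernel (condition (i)) annihilates all moments below order $q$, and the $q$-times continuous differentiability of $f$ and $l_k$ on the compact support (condition (vi)) yields a uniform bias of order $O(h^q)$. The stochastic term $\fhat - \E\fhat$ is a centered average of the triangular-array summands $K_h\{(\balphbar,\bbetabar)\trans(\bX_j - \bx)\} = h^{-2}K\{(\balphbar,\bbetabar)\trans(\bX_j-\bx)/h\}$, each having variance of order $h^{-2}$ (the $h^{-d}$ volume factor with $d=2$); pointwise, Lemma~\ref{lm:mnord} with $c_n = h^{-1}$ gives the rate $(nh^2)^{-1/2}$, and the identical argument applies to $\lhat_k$.

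The main obstacle is upgrading this pointwise stochastic rate to the \emph{uniform} rate $\{\log n/(nh^2)\}^{1/2}$, which is precisely where the extra $\log n$ factor enters. I would handle this by a standard grid/chaining argument: cover the compact image $\{(\balphbar,\bbetabar)\trans\bx : \bx \in \Xscr\}$ by a sufficiently fine grid of polynomially many (in $n$) points, apply Bernstein's inequality to the bounded summands at each grid point together with a union bound (the polynomial cardinality of the grid producing the $\log n$ factor) to obtain the $\{\log n/(nh^2)\}^{1/2}$ rate on the grid, and then control the oscillation between grid points using the boundedness and Lipschitz continuity of $K$ and $\Kdot$ (conditions (ii)--(iii)), with the grid chosen fine enough that the interpolation error is negligible relative to $\{\log n/(nh^2)\}^{1/2}$. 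This is exactly the two-dimensional ($d=2$) instance of the uniform convergence machinery of \cite{hansen2008uniform}, whose regularity hypotheses are met here by conditions (i)--(vi). Combining the $O(h^q)$ bias with the $O_p(\{\log n/(nh^2)\}^{1/2})$ uniform stochastic error for both $\fhat$ and $\lhat_k$ and substituting into the ratio bound yields $\sup_{\bx}\norm{\pihat_k(\bx;\bthetbar) - \pi_k(\bx;\bthetbar)} = O_p(a_n)$ with $a_n = h^q + \{\log n/(nh^2)\}^{1/2}$, as claimed.
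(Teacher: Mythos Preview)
Your proposal is correct and follows the same route as the paper: both reduce the problem to a standard two-dimensional kernel smoothing by observing that $\balphbar$ and $\bbetabar$ are fixed, then invoke the uniform convergence results of \cite{hansen2008uniform}. The paper simply cites that reference directly, whereas you spell out the ratio bound, bias--variance split, and grid/Bernstein argument that underlie it; your expansion is accurate but not a different approach.
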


\begin{lemma} \label{lm:avgrd}
Let $g(\bZ)$ denote a real-valued square-integrable transformation of the data $\bZ = (\bX,T,Y)\trans$.  Under the above regularity conditions and that $E\{ g(\bZ)|\bSbar = \bs\}$ and $E\{ \bX g(\bZ)|\bSbar = \bs\}$ are continuous in $\bs$:
\begin{align}
&n^{-1}\sum_{i=1}^n \ddbalphT \frac{g(\bZ_i)}{\pihat_k(\bX_i;\bthetbar)} = \E\left[ \Kdot_h(\bSbar_{ji})\trans \left\{ \pi_k(\bSbar_i) - I(T_j=k)\right\}\frac{g(\bZ_i)}{\pi_k(\bSbar_i)l_k(\bSbar_i)}\bXdagT_{ji}\right] + O_p(b_n)\\
&n^{-1}\sum_{i=1}^n \ddbbetaT \frac{g(\bZ_i)}{\pihat_k(\bX_i;\bthetbar)} = \E\left[ \Kdot_h(\bSbar_{ji})\trans \left\{ \pi_k(\bSbar_i) - I(T_j=k)\right\}\frac{g(\bZ_i)}{\pi_k(\bSbar_i)l_k(\bSbar_i)}\bXddagT_{ji}\right] + O_p(b_n),
\end{align}
where $b_n = n^{-1/2}h^{-1} + n^{-1}h^{-3}$, for $k=0,1$.
\end{lemma}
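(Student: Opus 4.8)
The plan is to differentiate explicitly, which reduces the left-hand side to a $V$-statistic whose summand is itself random (it contains plug-in kernel estimates), and then to (a) replace those plug-in estimates by their population limits and (b) apply a $V$-statistic projection to the resulting genuine $V$-statistic; the stated error $b_n=n^{-1/2}h^{-1}+n^{-1}h^{-3}$ will be assembled from the residuals of (a) and (b).

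First I would carry out the differentiation. Since only the first argument of the bivariate kernel depends on $\balph$, one has $\partial K_h\{(\balph,\bbeta)\trans(\bX_j-\bx)\}/\partial\balph\trans=\Kdot_h\{(\balph,\bbeta)\trans(\bX_j-\bx)\}\trans\bXdagT_{ji}$ at $\bx=\bX_i$, and analogously with the $\ddagger$-augmentation for $\bbeta$; the quotient rule applied to $1/\pihat_k(\bx;\bthet)=\fhat(\bx;\bthet)/\lhat_k(\bx;\bthet)$ then gives, with $\widetilde{l}_k(\bX_i;\bthetbar)=n^{-1}\lhat_k(\bX_i;\bthetbar)$, the exact identity
\begin{align*}
n^{-1}\sum_{i=1}^n\ddbalphT\frac{g(\bZ_i)}{\pihat_k(\bX_i;\bthetbar)} = n^{-2}\sum_{i=1}^n\sum_{j=1}^n \Kdot_h(\bSbar_{ji})\trans\bXdagT_{ji}\,\frac{g(\bZ_i)}{\widetilde{l}_k(\bX_i;\bthetbar)}\left\{1-\frac{I(T_j=k)}{\pihat_k(\bX_i;\bthetbar)}\right\},
\end{align*}
with $\bXdagT_{ji}$ replaced by $\bXddagT_{ji}$ for the $\bbeta$-gradient; note that the diagonal $i=j$ already drops out because $\bX_{ii}=\bzero$.

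The next, and main, step is to replace the plug-in factors $\widetilde{l}_k(\bX_i;\bthetbar)$ and $\pihat_k(\bX_i;\bthetbar)$ by $l_k(\bSbar_i)$ and $\pi_k(\bSbar_i)$. On the event of probability tending to one --- guaranteed by regularity conditions (iv)--(v), the positivity assumption, and Lemma \ref{lm:unicv} --- on which these plug-in quantities are bounded away from $0$ uniformly in $i$, a first-order Taylor expansion of $1/\widetilde{l}_k$ and $1/\pihat_k$ isolates a leading term which, using $1-I(T_j=k)/\pi_k(\bSbar_i)=\{\pi_k(\bSbar_i)-I(T_j=k)\}/\pi_k(\bSbar_i)$, is exactly the $V$-statistic $n^{-2}\sum_{i,j}\psi_h(\bZ_i,\bZ_j)$ with $\psi_h(\bZ_i,\bZ_j)=\Kdot_h(\bSbar_{ji})\trans\bXdagT_{ji}\,g(\bZ_i)\{\pi_k(\bSbar_i)-I(T_j=k)\}/\{\pi_k(\bSbar_i)l_k(\bSbar_i)\}$, whose mean is the asserted main term. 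The first-order corrections involve $\widetilde{l}_k(\bX_i;\bthetbar)-l_k(\bSbar_i)$ and $\pihat_k(\bX_i;\bthetbar)-\pi_k(\bSbar_i)$; expanding $\widetilde{l}_k(\bX_i;\bthetbar)=n^{-1}\sum_m K_h(\bSbar_{mi})I(T_m=k)$ (and likewise for $\pihat_k$) turns each into a third-order $V$-statistic in $(\bZ_i,\bZ_j,\bZ_m)$ with product kernel $\Kdot_h(\bSbar_{ji})K_h(\bSbar_{mi})$. Its genuinely off-diagonal part has mean $O(h^q)$ and, via its H\'ajek projection, fluctuation $O_p(n^{-1/2})$, both $o(b_n)$ in the admissible bandwidth range; the coincidence region $m=j$ --- where the same observation $\bZ_j$ enters both the outer weight $\Kdot_h(\bSbar_{ji})$ and $\widetilde{l}_k(\bX_i;\bthetbar)$ --- produces a term of mean order $n^{-1}h^{-3}$, the second summand of $b_n$; and the quadratic and higher remainders of the Taylor expansion are of smaller order. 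Verifying that the $m=j$ coincidence is exactly $O_p(n^{-1}h^{-3})$ and that no larger term is hidden in the remainders is the hard part.

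Finally I would treat the clean $V$-statistic $n^{-2}\sum_{i,j}\psi_h$ with the projection lemma of \cite{newey1994large}: it equals $\E\psi_h+n^{-1}\sum_i\{\psi_{h,1}(\bZ_i)-\E\psi_h\}+n^{-1}\sum_j\{\psi_{h,2}(\bZ_j)-\E\psi_h\}+R_n$, where $\psi_{h,1}(\bz)=\E\psi_h(\bz,\bZ)$, $\psi_{h,2}(\bz)=\E\psi_h(\bZ,\bz)$, $R_n$ is the degenerate remainder, and the diagonal again vanishes. Under regularity conditions (vi)--(vii), integrating by parts against the continuously differentiable conditional moments of $\bX$ gives $\E\psi_h=O(1)$ and $\psi_{h,1}=O(1)$, so $n^{-1}\sum_i\{\psi_{h,1}-\E\psi_h\}=O_p(n^{-1/2})=o(b_n)$; however $\psi_{h,2}$ involves the conditional moments $\E\{g(\bZ)\mid\bSbar\}$ and $\E\{\bX g(\bZ)\mid\bSbar\}$, which are only assumed continuous, so integration by parts is unavailable and only $\psi_{h,2}=o(h^{-1})$ with $\E\psi_{h,2}^2=O(h^{-2})$ hold, yielding $n^{-1}\sum_j\{\psi_{h,2}-\E\psi_h\}=O_p(n^{-1/2}h^{-1})$, the first summand of $b_n$; and $\E\psi_h^2=O(h^{-4})$ gives $R_n=O_p(n^{-1}h^{-2})=o(n^{-1}h^{-3})$. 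Collecting the three contributions yields $n^{-1}\sum_i\ddbalphT\{g(\bZ_i)/\pihat_k(\bX_i;\bthetbar)\}=\E\psi_h+O_p(b_n)$, which is the claimed expansion, and the $\bbeta$-derivative follows identically with $\bXdagT_{ji}$ replaced by $\bXddagT_{ji}$.
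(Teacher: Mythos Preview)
Your proposal follows essentially the same three-step architecture as the paper: compute the gradient explicitly to obtain a double sum with summand $\Kdot_h(\bSbar_{ji})\trans\{\lhat_k-I(T_j=k)\fhat\}\lhat_k^{-2}\bXdagT_{ji}g(\bZ_i)$, replace the plug-in estimates $\lhat_k,\fhat$ by their limits $l_k,f$ via uniform convergence, and then apply the $V$-statistic projection lemma to the resulting clean second-order $V$-statistic.

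The only substantive difference is in the bookkeeping of where the two pieces of $b_n$ originate. The paper handles the plug-in replacement crudely as ``$O_p(a_n\Escr_n)$, a negligible lower-order term,'' and then bounds both marginals $\bm_{1,k}(\bZ_j),\bm_{2,k}(\bZ_i)$ by $O_p(h^{-1})$ via change of variables (yielding the $n^{-1/2}h^{-1}$), while taking the crude sup bound $|\Kdot_h|\leq Ch^{-3}$ on $\bepsi_{2,k}$ to get the $n^{-1}h^{-3}$. You instead treat the replacement error itself as a third-order $V$-statistic and locate the $n^{-1}h^{-3}$ in its $m=j$ coincidence, then sharpen one marginal to $O(1)$ via integration by parts and the degenerate remainder to $O_p(n^{-1}h^{-2})$ by actually integrating $\E\psi_h^2=O(h^{-4})$. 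Your accounting is tighter and arguably more honest about where the $n^{-1}h^{-3}$ really comes from, but the route and the final rate are the same.
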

\begin{proof}
We will show the first equality for the gradient with respect to $\balph$, with the second equality being analogous.  First note each of the gradients can be written:
\begin{align}
\ddbalphT \frac{1}{\pihat_k(\bX_i;\bthetbar)} &= \frac{\ddbalphT \fhat(\bX_i;\bthetbar)\lhat_k(\bX_i;\bthetbar) - \fhat(\bX_i;\bthetbar) \ddbalphT\lhat_k(\bX_i;\bthetbar)}{\lhat_k(\bX_i;\bthetbar)^2} \\
&= n^{-1}\sum_{j=1}^n \Kdot_h(\bSbar_{ji})\trans \frac{\lhat_k(\bX_i;\bthetbar) - I(T_j = k)\fhat(\bX_i;\bthetbar)}{\lhat_k(\bX_i;\bthetbar)^2}\bXdagT_{ji}.
\end{align}
Consequently, the average of the gradients can be written:
\begin{align}
&n^{-1}\sum_{i=1}^n \ddbalphT \frac{g(\bZ_i)}{\pihat_k(\bX_i;\bthetbar)} = n^{-2}\sum_{i,j} \Kdot_h(\bSbar_{ji})\trans \frac{\lhat_k(\bX_i;\bthetbar) - I(T_j = k)\fhat(\bX_i;\bthetbar)}{\lhat_k(\bX_i;\bthetbar)^2}\bXdagT_{ji} g(\bZ_i) \\
&\qquad= n^{-2}\sum_{i,j} \Kdot_h(\bSbar_{ji})\trans \frac{l_k(\bSbar_i) - I(T_j = k)f(\bSbar_i)}{\lhat_k(\bX_i;\bthetbar)^2}\bXdagT_{ji} g(\bZ_i) + O_p(a_n \Escr_n) \\
&\qquad= n^{-2}\sum_{i,j} \Kdot_h(\bSbar_{ji})\trans \frac{l_k(\bSbar_i) - I(T_j = k)f(\bSbar_i)}{\l_k(\bSbar_i)^2}\bXdagT_{ji} g(\bZ_i) + O_p(a_n \Escr_n),
\end{align}
where we make repeated use of the uniform convergence of $\lhat_k(\bX_i;\bthetbar)$ and $\fhat(\bX_i;\bthetbar)$ to $l_k(\bSbar_i)$ and $f(\bSbar_i)$, $\Escr_n$ is a term of the same order as the main term so that $O_p(a_n\Escr_n)$ will be a negligible lower-order term, and use that $l_k(\bs)$ is bounded over its support in the last equality.  To facilitate application of the V-statistic projection lemma, define:
\begin{align}
&\bm_{1,k}(\bZ_j) = \E_{\bZ_i}\left\{ \Kdot_h(\bSbar_{ji})\trans \frac{l_k(\bSbar_i)-I(T_j=k)f(\bSbar_i)}{l_k(\bSbar_i)^2}\bXdagT_{ji}g(\bZ_i)\right\} \\
&\bm_{2,k}(\bZ_i) = \E_{\bZ_j}\left\{ \Kdot_h(\bSbar_{ji})\trans \frac{l_k(\bSbar_i)-I(T_j=k)f(\bSbar_i)}{l_k(\bSbar_i)^2}\bXdagT_{ji}g(\bZ_i)\right\}\\
&\bm_{k} = \E\left\{ \Kdot_h(\bSbar_{ji})\trans \frac{l_k(\bSbar_i)-I(T_j=k)f(\bSbar_i)}{l_k(\bSbar_i)^2}\bXdagT_{ji}g(\bZ_i)\right\}\\
&\bepsi_{1,k} = n^{-1}\E\norm{\Kdot_h(\bSbar_{ii})\trans \frac{l_k(\bSbar_i)-I(T_i=k)f(\bSbar_i)}{l_k(\bSbar_i)^2}\bXdagT_{ii}g(\bZ_i)} = 0 \\
&\bepsi_{2,k} = n^{-1}\left( \E\left[ \left\{\Kdot_h(\bSbar_{ji})\trans\frac{l_k(\bSbar_i)-I(T_j=k)f(\bSbar_i)}{l_k(\bSbar_i)^2}\bXdagT_{ji}g(\bZ_i)\right\}^2\right]\right)^{1/2}
\end{align}
We now further evaluate each term.  The first term can be simplified through a change-of-variables:
\begin{align}
&\bm_{1,k}(\bZ_j) = \E_{\bSbar_i}\left[ \Kdot_h(\bSbar_{ji})\trans\left\{1 - \frac{I(T_j=k)}{\pi_k(\bSbar_i)} \right\}\frac{1}{l_k(\bSbar_i)}\E\left\{ \bXdagT_{ji}g(\bZ_i)\mid\bSbar_i\right\}\right] \\
&= \int \Kdot_h(\bSbar_j - \bs_1)\trans \left\{1 - \frac{I(T_j=k)}{\pi_k(\bs_1)} \right\}\frac{1}{\pi_k(\bs_1)}\E\left\{ \bXdagT_{ji}g(\bZ_i)\mid\bSbar_i=\bs_1\right\}d\bs_1 \\
&= h^{-1}\int \Kdot(\bpsi_j)\trans \left\{1 - \frac{I(T_j=k)}{\pi_k(h\bpsi_j + \bSbar_j)} \right\}\frac{1}{\pi_k(h\bpsi_j + \bSbar_j)}\E\left\{ \bXdagT_{ji}g(\bZ_i)\mid\bSbar_i=h\bpsi_j + \bSbar_j\right\}d\bpsi_j \\
&= O_p(h^{-1}),
\end{align}
where the last step follows from bounding the integrand.  Similarly for the second term:
\begin{align}
&\bm_{2,k}(\bZ_i) = \E_{\bSbar_j}\left[\Kdot_h(\bSbar_{ji})\trans \E\left\{ (1 - \frac{I(T_j=k)}{\pi_k(\bSbar_i)})\bXddagT_{ji} \mid \bSbar_j \right\}\frac{g(\bZ_i)}{l_k(\bSbar_i)}\right] \\
&= \int \Kdot_h(\bs_2 - \bSbar_i)\trans \E\left[ \left\{1 - \frac{I(T_j=k)}{\pi_k(\bSbar_i)}\right\}\bXddagT_{ji} \mid \bSbar_j = \bs_2 \right]\frac{g(\bZ_i)}{l_k(\bSbar_i)}f(\bs_2)d\bs_2 \\
&= h^{-1}\int \Kdot(\bpsi_i)\trans \E\left[ \left\{1 - \frac{I(T_j=k)}{\pi_k(\bSbar_i)}\right\}\bXddagT_{ji} \mid \bSbar_j = h\bpsi_i+\bSbar_i \right]\frac{g(\bZ_i)}{l_k(\bSbar_i)}f(h\bpsi_i+\bSbar_i)d\bpsi_i \\
&= O_p(h^{-1}),
\end{align}
where again the last step follows from bounding the integrand.  Now,
$\bepsi_{2,k} = O_p(n^{-1}h^{-3})$ from bounding the terms in the expectation, except for $g(\bZ_i)$.  The projection lemma thus yields:
\begin{align}
&n^{-2}\sum_{i,j} \Kdot_h(\bSbar_{ji}) \frac{l_k(\bSbar_i) - I(T_j = k)f(\bSbar_i)}{\l_k(\bSbar_i)^2}\bXdagT_{ji} g(\bZ_i)  \\
&\qquad = \bm_k + n^{-1}\sum_{j=1}^n \bm_{1,k}(\bZ_j) - \bm_k + n^{-1}\sum_{i=1}^n \bm_{2,k}(\bZ_i)-\bm_k + O_p(\bepsi_1 + \bepsi_2) \\
&\qquad = \bm_k + O_p(n^{-1/2}h^{-1}) + O_p(h^{n^{-1}h^{-3}}),
\end{align}
for $k=0,1$, where the last line follows from application of Lemma \ref{lm:mnord}.  Re-arrangement of terms and collecting the dominant errors yield the desired result.
\end{proof}

\subsection*{Expansion of Normalization Constant} \label{wa:nrmconst}
We will first show the normalization constant is $1$ up to some lower order terms, which will allow us to account for the normalization in the expansion.  The approach for the analysis parallels that of the main expansion.  First note that:
\begin{eq}
n^{-1}\sum_{i=1}^n \ipwhati &= n^{-1}\sum_{i=1}^n \ipwbari + n^{-1}\sum_{i=1}^n \left\{ \frac{1}{\pihat_k(\bX_i;\bthetbar)} - \frac{1}{\pi_k(\bX_i;\bthetbar)}\right\}I(T_i = k) \\
&\qquad + n^{-1}\sum_{i=1}^n \left\{ \frac{1}{\pihat_k(\bX_i;\bthethat)} - \frac{1}{\pihat_k(\bX_i;\bthetbar)}\right\} I(T_i = k) \\
&= \Vscrhat_{1,k} + \Vscrhat_{2,k} + \Vscrhat_{3,k},
\end{eq}
where:
\begin{eq}
&\Vscrhat_{1,k} = n^{-1}\sum_{i=1}^n \ipwbari, \quad \Vscrhat_{2,k} = n^{-1}\sum_{i=1}^n \left\{ \frac{1}{\pihat_k(\bX_i;\bthetbar)} - \frac{1}{\pi_k(\bX_i;\bthetbar)}\right\}I(T_i = k), \\
&\Vscrhat_{3,k} = n^{-1}\sum_{i=1}^n \left\{ \frac{1}{\pihat_k(\bX_i;\bthethat)} - \frac{1}{\pihat_k(\bX_i;\bthetbar)}\right\} I(T_i = k).
\end{eq}

The second term is of order:
\begin{eq}
\abs{\Vscrhat_{2,k}} &= n^{-1}\abs{\sum_{i=1}^n \frac{\pi_k(\bX_i;\bthetbar)-\pihat_k(\bX_i;\bthetbar)}{\pi_k(\bX_i;\bthetbar)\pihat_k(\bX_i;\bthetbar)}I(T_i=k)} \\
&\leq \sup_{\bX_i}\abs{\pihat_k(\bX_i;\bthetbar)-\pi_k(\bX_i;\bthetbar)} n^{-1}\sum_{i=1}^n \abs{\frac{I(T_i = k)}{\pihat_k(\bX_i;\bthetbar)\pi_k(\bX_i;\bthetbar)}} \\
&= O_p(a_n),
\end{eq}
where the last step follows from uniform convergence of $\pihat_k(\bX_i;\bbetabar)$ to $\pi_k(\bX_i;\bthetbar)$ and noting the remaining sum is $O_p(1)$ plus some lower-order term.  The third term can be written:
\begin{eq}
\Vscrhat_{3,k} &= n^{-1}\sum_{i=1}^n \left\{ \frac{1}{\pihat_k(\bX_i;\balphhat,\bbetahat)} - \frac{1}{\pihat_k(\bX_i;\balphbar,\bbetahat)} + \frac{1}{\pihat_k(\bX_i;\balphbar,\bbetahat)} - \frac{1}{\pihat_k(\bX_i;\balphbar,\bbetabar)}\right\}I(T_i=k) \\
&= n^{-1}\sum_{i=1}^n \left\{\ddbalphT \frac{\balphhat-\balphbar}{\pihat_k(\bX_i;\balphbar;\bbetahat)} + \ddbbetaT \frac{\bbetahat-\bbetabar}{\pihat_k(\bX_i;\balphbar,\bbetabar)}\right\}I(T_i =k) + O_p(n^{-1}\Escr_{\balph,n} + n^{-1}\Escr_{\bbeta,n}) \\
&= n^{-1}\sum_{i=1}^n \left\{\ddbalphT \frac{\balphhat-\balphbar}{\pihat_k(\bX_i;\balphbar;\bbetabar)} + \ddbbetaT \frac{\bbetahat-\bbetabar}{\pihat_k(\bX_i;\balphbar,\bbetabar)}\right\}I(T_i = k) \\
&\qquad\qquad\qquad\qquad + O_p(n^{-1}\Escr_{\balph,n} + n^{-1}\Escr_{\bbeta,n} + n^{-1}\Escr_{\balph\bbeta,n}),
\end{eq}
where the last equality uses that that $\Kdot(\bu)$ is Lipshitz continuous and that $\Escr_{\balph,n}$, $\Escr_{\bbeta,n}$, and $\Escr_{\balph\bbeta,n}$ are terms of the same order as $n^{-1}\sum_{i=1}^n \ddbalphT \pihat_k(\bX_i;\balphbar;\bbetabar)^{-1}$ so that the error terms will be negligible lower-order terms.  Applying Lemma \ref{lm:avgrd}, we can simplify:
\begin{align}
n^{-1}\sum_{i=1}^n \ddbalphT \frac{I(T_i =k )}{\pihat_k(\bX_i;\balphbar;\bbetabar)} &= \E\left[ \Kdot_h(\bSbar_{ji})\trans \left\{ \pi_k(\bSbar_i) - I(T_j=k)\right\}\frac{I(T_i=k)}{\pi_k(\bSbar_i)l_k(\bSbar_i)}\bXdagT_{ji}\right] + O_p(b_n).
\end{align}
Further simplifying the expectation we have:
\begin{align}
&\E\left[ \Kdot_h(\bSbar_{ji})\trans \left\{ \pi_k(\bSbar_i) - I(T_j=k)\right\}\frac{I(T_i=k)}{\pi_k(\bSbar_i)l_k(\bSbar_i)}\bXdagT_{ji}\right] \\
&\quad= \E\left( \frac{\Kdot_h(\bSbar_{ji})}{l_k(\bSbar_i)}\trans \left[ \pi_k(\bSbar_i)\left\{ \E(\bXdagT_j\mid \bSbar_j) - \E(\bXdagT_i\mid \bSbar_i, T_i=k)\right\} \right.\right. \\
&\quad\quad\left.\left. - \pi_k(\bSbar_j)\left\{ \E(\bXdagT_j\mid\bSbar_j,T_j=k)- \E(\bXdagT_i\mid\bSbar_i,T_i=k)\right\} \vphantom{\pi_k(\bSbar_i)}\right]\vphantom{\frac{\Kdot_h(\bSbar_{ji})}{l_k(\bSbar_i)}} \right) \\
&\quad = h^{-1}\iint\Kdot(\bpsi_1) \frac{f(h\bpsi_1 + \bs_1)}{\pi_k(\bs_1)}\left[ \pi_k(\bs_1)\left\{ \E(\bX_j\mid\bSbar_j=h\bpsi_1 + \bs_1) - \E(\bXdagT_i\mid\bSbar_i=\bs_1,T_i=k)\right\} \right.\\
&\quad\quad -\left. \pi_k(h\bpsi_1 + \bs_1)\left\{ \E(\bXdagT_j\mid \bSbar_j = h\bpsi_1 + \bs_1, T_j = k)- \E(\bXdagT_i \mid \bSbar_i=\bs_1, T_i = k)\right\}\right]d\bpsi_1d\bs_1 \\
&= O(h^{-1}),
\end{align}
where the last step follows from bounding terms in the integrand.  Similarly:
\begin{align}
n^{-1}\sum_{i=1}^n \ddbbetaT \frac{I(T_i = k)}{\pihat_k(\bX_i;\balphbar,\bbetabar)} =  O_p(h^{-1}) + O_p(b_n).
\end{align}  

Collecting all the results:
\begin{align}
n^{-1}\sum_{i=1}^n \frac{I(T_i = k)}{\pihat_k (\bX_i;\bthethat)} &= 1+O_p(n^{-1/2}) + O_p(a_n) + O_p(n^{-1/2}h^{-1}) + O_p(n^{-1/2}b_n) \\
&= 1 + O_p(a_n).
\end{align}

\subsection*{Main Results} \label{wa:mainexp}
The approach for the expansion will be to decompose $\Wscrhat_k$ into terms representing the variability contributed from smoothing, with known $\bthet$, and from estimating $\bthet$.  The term contributed from smoothing is written in terms of a V-statistic and analyzed using a V-statistic projection lemma (Lemma 8.4 of \cite{newey1994large}).  The term contributed from estimating $\bthet$ is analyzed applying arguments for the oracle properties of adaptive LASSO estimators from \cite{zou2009adaptive},\cite{hui2015tuning}, and \cite{lu2012robustness}.  First note that:
\begin{eq}
\Wscrhat_k &= \left\{ n^{-1}\sum_{i=1}^n \ipwhati \right\}^{-1}\left\{ n^{-1/2}\sum_{i=1}^n \ipwhati (Y_i - \mubar_k)\right\} \\
&= n^{-1/2}\sum_{i=1}^n \ipwhati (Y_i - \mubar_k) + \left[\left\{ 1+O_p(a_n)\right\}^{-1} -1\right] n^{-1/2}\sum_{i=1}^n \ipwhati (Y_i - \mubar_k) \\
&= n^{-1/2}\sum_{i=1}^n \ipwhati (Y_i - \mubar_k) \left\{ 1+ O_p(a_n)\right\},
\end{eq}
where the second step follows from the result in Web Appendix C.  Define:
\begin{eq}
\Wscrtld_k &= n^{-1/2}\sum_{i=1}^n \ipwhati (Y_i - \mubar_k) = \Wscrtld_{1,k} + \Wscrtld_{2,k} + \Wscrtld_{3,k},
\end{eq}
where:
\begin{eq}
&\Wscrtld_{1,k} = n^{-1/2}\sum_{i=1}^n \ipwbari (Y_i - \mubar_k) \\
&\Wscrtld_{2,k} = n^{-1/2}\sum_{i=1}^n \left\{ \frac{1}{\pihat_k(\bX_i;\bthetbar)} - \frac{1}{\pi_k(\bX_i;\bthetbar)}\right\}I(T_i=k)(Y_i - \mubar_k) \\
&\Wscrtld_{3,k} = n^{-1/2}\sum_{i=1}^n \left\{ \frac{1}{\pihat_k(\bX_i;\bthethat)} - \frac{1}{\pihat_k(\bX_i;\bthetbar)}\right\}I(T_i=k)(Y_i - \mubar_k).
\end{eq}

We now proceed to further expand the second and third terms.  For the second term:
\begin{align}
&\Wscrtld_{2,k} 
= -n^{-1/2}\sum_{i=1}^n \frac{\lhat(\bX_i;\bthetbar)-\fhat(\bX_i;\bthetbar)\pi_k(\bSbar_i)}{l_k(\bX_i;\bthetbar)}\ipwbari(Y_i - \mubar_k) \\
&\quad + -n^{-1/2}\sum_{i=1}^n \left\{ \frac{1}{\lhat_k(\bX_i;\bthetbar)} - \frac{1}{l_k(\bX_i;\bthetbar)}\right\}\left\{ \lhat(\bX_i;\bthetbar)-\fhat(\bX_i;\bthetbar)\pi_k(\bSbar_i) \right\}\ipwbari(Y_i - \mubar_k) \\
&= -n^{-1/2}\sum_{i=1}^n \frac{\lhat(\bX_i;\bthetbar)-\fhat(\bX_i;\bthetbar)\pi_k(\bSbar_i)}{l_k(\bX_i;\bthetbar)}\ipwbari(Y_i - \mubar_k)  + O_p(n^{1/2}a_n^2),
\end{align}
where the last equality follows from repeated use of uniform convergence of of $\lhat(\bX_i;\bthetbar)$ and $\fhat(\bX_i;\bthetbar)$ to $l(\bX_i;\bthetbar)$ and $f(\bX_i;\bthetbar)$ and that $n^{-1/2}\sum_{i=1}^n \ipwbari\frac{Y_i - \mubar}{l_k(\bX_i;\bthetbar)} = O_p(n^{1/2})$. Thus:
\begin{align}
\Wscrtld_{2,k} &= -n^{1/2} n^{-2}\sum_{i,j} K_h(\bSbar_{ji}) \left\{ I(T_j=k)-\pi_k(\bSbar_i)\right\} \ipwbari \frac{Y_i - \mubar_k}{l_k(\bSbar_i)} + O_p(n^{1/2}a_n^2) \\
&= \Wscrtld_{ct,2,k} + \Wscrtld_{nc,2,k} + O_p(n^{1/2}a_n^2),
\end{align}
with a centered and a non-centered V-statistic:
\begin{align}
&\Wscrtld_{ct,2,k} = -n^{1/2} n^{-2}\sum_{i,j} K_h(\bSbar_{ji}) \left\{ I(T_j=k)-\pi_k(\bSbar_j)\right\} \ipwbari \frac{Y_i - \mubar_k}{l_k(\bSbar_i)} \\
&\Wscrtld_{nc,2,k} = -n^{1/2} n^{-2}\sum_{i,j} K_h(\bSbar_{ji}) \left\{ \pi_k(\bSbar_i)-\pi_k(\bSbar_i)\right\} \ipwbari \frac{Y_i - \mubar_k}{l_k(\bSbar_i)}.
\end{align}

To facilitate application of the projection lemma, let:
\begin{align}
&m_{1,ct,2,k}(\bZ_j) = \E_{\bZ_i}\left[ K_h(\bSbar_{ji}) \left\{ I(T_j=k)-\pi_k(\bSbar_j)\right\} \ipwbari \frac{Y_i - \mubar_k}{l_k(\bSbar_i)} \right]\\
&m_{2,ct,2,k}(\bZ_i) = \E_{\bZ_j}\left[ K_h(\bSbar_{ji}) \left\{ I(T_j=k)-\pi_k(\bSbar_j)\right\} \ipwbari \frac{Y_i - \mubar_k}{l_k(\bSbar_i)} \right] \\
&m_{ct,2,k} = \E\left[ K_h(\bSbar_{ji}) \left\{ I(T_j=k)-\pi_k(\bSbar_j)\right\} \ipwbari \frac{Y_i - \mubar_k}{l_k(\bSbar_i)} \right] \\
&\varepsilon_{1,ct,2,k} = n^{-1}\E\abs{ K_h(\bSbar_{ii}) \left\{ I(T_i=k)-\pi_k(\bSbar_i)\right\} \ipwbari \frac{Y_i - \mubar_k}{l_k(\bSbar_i)} } \\
&\varepsilon_{2,ct,2,k} = n^{-1}\E\left( \left[K_h(\bSbar_{ji}) \left\{ I(T_j=k)-\pi_k(\bSbar_j)\right\} \ipwbari \frac{Y_i - \mubar_k}{l_k(\bSbar_i)} \right]^2\right)^{1/2}.
\end{align}

We now evaluate each term.  The first term can be simplified through change-of-variables:
\begin{align}
&m_{1,ct,2,k}(\bZ_j) = \E_{\bSbar_i}\left[ K_h(\bSbar_{ji})\left\{ I(T_j = k)-\pi_k(\bSbar_j)\right\} \frac{\E(Y_i -\mubar_k\mid \bSbar_i, T_i = k)}{l_k(\bSbar_i)}\right] \\
&= \int K_h(\bSbar_j - \bs_1)\xi_k(\bs_1)\left\{ I(T_j = k) - \pi_k(\bSbar_j)\right\} d\bs_1 \\
&= \int K(\bpsi_j)\xi_k(h\bpsi_j + \bSbar_j)\left\{ I(T_j = k) - \pi_k(\bSbar_j)\right\} d\bpsi_j \\
&= \int K(\bpsi_j)\left\{ \xi_k(\bSbar_j) + h\bpsi_j\trans\ddbs\xi_k(\bSbar_j) + \ldots + \frac{h^q}{q!}\bpsi_j^{\otimes q} \otimes \ddbsq \xi_k(\bSbar_j^*)\right\}\left\{ I(T_j = k) - \pi_k(\bSbar_j)\right\} d\bpsi_j \\
&= \left\{ \xi_k(\bSbar_j) + \frac{h^q}{q!}\int K(\bpsi_j)\bpsi_j^{\otimes q} \otimes\ddbsq\xi_k(\bSbar_j^*)d\bpsi_j\right\}\left\{ I(T_j = k) - \pi_k(\bSbar_j)\right\} \\
&= \left\{\frac{I(T_j=k)}{\pi_k(\bSbar_j)}-1\right\}\E(Y_i -\mubar_k\mid \bSbar_i=\bSbar_j, T_i = k) \\
&\qquad + \frac{h^q}{q!}\int K(\bpsi_j)\bpsi_j^{\otimes q} \otimes\ddbsq\xi_k(\bSbar_j^*) d\bpsi_j\left\{ I(T_j = k) - \pi_k(\bSbar_j)\right\},
\end{align}
where $\norm{\bSbar_j^*-\bSbar_j} \leq h\norm{\bpsi_j}$ and:
\begin{align}
\xi_k(\bs) = \frac{\E(Y_i -\mubar_k\mid \bSbar_i=\bs, T_i = k)}{\pi_k(\bs)}.
\end{align}

For the second term, due to the centering:
\begin{align}
&m_{2,ct,2,k}(\bZ_i) = \E_{\bSbar_j}\left[ K_h(\bSbar_{ji}) \left\{ \pi_k(\bSbar_j)-\pi_k(\bSbar_j)\right\} \ipwbari \frac{Y_i - \mubar_k}{l_k(\bSbar_i)} \right] = 0 \\
&m_{ct,2,k}(\bZ_i) = \E(m_{2,ct,2,k}(\bZ_i)) = 0.
\end{align}

For the remaining terms:
\begin{align}
&\varepsilon_{1,ct,2,k} = n^{-1}h^{-2}K(\bzero)\E\abs{ \left\{ I(T_i=k)-\pi_k(\bSbar_i)\right\} \ipwbari \frac{Y_i - \mubar_k}{l_k(\bSbar_i)} } = O(n^{-1}h^{-2})\\
&\varepsilon_{2,ct,2,k} = n^{-1}\E\left( \left[K_h(\bSbar_{ji}) \left\{ I(T_j=k)-\pi_k(\bSbar_j)\right\} \ipwbari \frac{Y_i - \mubar_k}{l_k(\bSbar_i)} \right]^2\right)^{1/2} = O(n^{-1}h^{-2}),
\end{align}
where the order of the second error can be obtained from bounding terms inside the expectation.  Now, we apply the projection lemma to find that:
\begin{align}
\Wscrtld_{ct,2,k} &= -n^{1/2} \left[ n^{-1}\sum_{j=1}^n m_{1,ct,2,k}(\bZ_j) -m_{ct,2,k} + n^{-1}\sum_{j=1}^n m_{1,ct,2,k}(\bZ_j) -m_{ct,2,k} \right.\\
&\qquad \left. + m_{ct,2,k}+ O_p(\varepsilon_{1,ct,2,k} + \varepsilon_{2,ct,2,k})\vphantom{[ n^{-1}\sum_{j=1}^n}\right] \\
&= n^{-1/2}\sum_{j=1}^n -\left\{ \frac{I(T_j = k)}{\pi_k(\bSbar_j)}-1 \right\}\E(Y_j - \mubar_k\mid \bSbar_j, T_j = k) + O_p(h^q) + O_p(n^{-1/2}h^{-2}).
\end{align}
We used that $\pi(\bs)$ and $E(Y|\bSbar=\bs,T=k)$ are $q$-times continuously differentiable to bound the remainder error term from $m_{1,ct,2,k}(\bZ_j)$.

We now repeat a similar analysis for $\Wscrtld_{nc,2,k}$.  Let:
\begin{align}
&m_{1,nc,2,k}(\bZ_j) = \E_{\bZ_i}\left[ K_h(\bSbar_{ji}) \left\{ \pi_k(\bSbar_j)-\pi_k(\bSbar_i)\right\} \ipwbari \frac{Y_i - \mubar_k}{l_k(\bSbar_i)} \right]\\
&m_{2,nc,2,k}(\bZ_i) = \E_{\bZ_j}\left[ K_h(\bSbar_{ji}) \left\{ \pi_k(\bSbar_j)-\pi_k(\bSbar_i)\right\} \ipwbari \frac{Y_i - \mubar_k}{l_k(\bSbar_i)} \right] \\
&m_{nc,2,k} = \E\left[ K_h(\bSbar_{ji}) \left\{ \pi_k(\bSbar_j)-\pi_k(\bSbar_i)\right\} \ipwbari \frac{Y_i - \mubar_k}{l_k(\bSbar_i)} \right] \\
&\varepsilon_{1,nc,2,k} = n^{-1}\E\abs{ K_h(\bSbar_{ii}) \left\{ \pi_k(\bSbar_i)-\pi_k(\bSbar_i)\right\} \ipwbari \frac{Y_i - \mubar_k}{l_k(\bSbar_i)} } \\
&\varepsilon_{2,nc,2,k} = n^{-1}\E\left( \left[K_h(\bSbar_{ji}) \left\{ \pi_k(\bSbar_j)-\pi_k(\bSbar_i)\right\} \ipwbari \frac{Y_i - \mubar_k}{l_k(\bSbar_i)} \right]^2\right)^{1/2}.
\end{align}

The first term is:
\begin{align}
&m_{1,nc,2,k}(\bZ_j) = E_{\bSbar_i}\left[ K_h(\bSbar_{ji}) \left\{ \pi_k(\bSbar_j)-\pi_k(\bSbar_i)\right\} \frac{\E(Y_i - \mubar_k\mid \bSbar_i,T_i = k}{l_k(\bSbar_i)} \right] \\
& = \int K_h(\bSbar_j - \bs_1) \left\{ \pi_k(\bSbar_j) - \pi_k(\bs_1)\right\} \xi_k(\bs_1) d\bs_1 \\
&= \int K(\bpsi_j)\left\{ \pi_k(\bSbar_j) - \pi_k(h\bpsi_j + \bSbar_j)\right\} \xi_k(h\bpsi_j + \bSbar_j) d\bpsi_j \\
&= \int K(\bpsi_j)\left\{ -h\bpsi_j\trans \ddbs \pi_k(\bSbar_j) - \ldots - \frac{h^q}{q!}\bpsi_j^{\otimes q}\otimes\ddbsq\pi_k(\bSbar_j^*)\right\} \xi_k(h\bpsi_j + \bSbar_j) d\bpsi_j \\
&= O_p(h^q)
\end{align}
where $\bSbar_j^*$ is such that $\norm{\bSbar_j^* - \bSbar_j} \leq h\norm{\bpsi_j}$ and the last equality can be obtained through bounding $\ddbsq\pi_k(\bSbar_j^*)$ and $\xi_k(h\bpsi_j+\bSbar_j)$. Similarly, for the second term:
\begin{align}
&m_{2,nc,2,k}(\bZ_i) = \E_{\bSbar_j}\left[ K_h(\bSbar_{ji}) \left\{ \pi_k(\bSbar_j)-\pi_k(\bSbar_i)\right\} \ipwbari \frac{Y_i - \mubar_k}{l_k(\bSbar_i)} \right] \\
&=\int K_h(\bs_2 - \bSbar_i)\left\{ \pi_k(\bs_2)-\pi_k(\bSbar_i)\right\} \ipwbari \frac{Y_i - \mubar_k}{l_k(\bSbar_i)} f(\bs_2)d\bs_2 \\
&=\int K(\bpsi_i) \left\{ \pi_k(h\bpsi_i+\bSbar_i)-\pi_k(\bSbar_i)\right\} f(h\bpsi_i+\bSbar_i) d\bpsi_i\ipwbari \frac{Y_i - \mubar_k}{l_k(\bSbar_i)} \\
&=\int K(\bpsi_i) \left\{ h\bpsi_i\trans \ddbs\pi_k(\bSbar_i) + \ldots + \frac{h^q}{q!}\bpsi_i^{\otimes q}\otimes \ddbsq \pi_k(\bSbar_i^*)\right\} f(h\bpsi_i+\bSbar_i) d\bpsi_i\ipwbari \frac{Y_i - \mubar_k}{l_k(\bSbar_i)} \\
&=O_p(h^q),
\end{align}
where $\bSbar_i^*$ is such that $\norm{\bSbar_i^* - \bSbar_i} \leq h \norm{\bpsi_i}$ and the last equality could be obtained through bounding $\ddbsq\pi_k(\bSbar_i^*)$ and $f(h\bpsi_i + \bSbar_i)$.  The errors are:
\begin{align}
&\varepsilon_{1,nc,2,k} = n^{-1}\E\abs{ K_h(\bzero) 0 \ipwbari \frac{Y_i - \mubar_k}{l_k(\bSbar_i)} } =0\\
&\varepsilon_{2,nc,2,k} = n^{-1}\E\left( \left[K_h(\bSbar_{ji}) \left\{ \pi_k(\bSbar_j)-\pi_k(\bSbar_i)\right\} \ipwbari \frac{Y_i - \mubar_k}{l_k(\bSbar_i)} \right]^2\right)^{1/2}= O(n^{-1}h^{-2}),
\end{align}
where the order of the second error can be obtained from bounding terms inside the expectation.  Application of the projection lemma now yields:
\begin{align}
\Wscrtld_{nc,2,k} &= -n^{1/2}\left[n^{-1}\sum_{i=1}^n \left\{ m_{1,nc,2,k}(\bZ_j) - m_{nc,2,k}\right\} + n^{-1}\sum_{i=1}^n \left\{ m_{2,nc,2,k}(\bZ_j) - m_{nc,2,k}\right\}\right] \\
&\qquad + m_{nc,2,k} + O_p(\varepsilon_{1,nc,2,k}+ \varepsilon_{2,nc,2,k}) \\
&= O_p(h^q) -n^{1/2}m_{nc,2,k} + O_p(n^{-1/2}h^{-2}),
\end{align}
where we use that $Var\{m_{1,nc,2,k}(\bZ_j)\}= O(h^{2q})$ and $Var\{m_{2,nc,2,k}(\bZ_i)\}= O(h^{2q})$ and apply Lemma \ref{lm:mnord}.  We now evaluate $m_{nc,2,k}$:
\begin{align}
&m_{nc,2,k} = \E\left[ K_h(\bSbar_{ji}) \left\{ \pi_k(\bSbar_j)-\pi_k(\bSbar_i)\right\} \frac{E(Y_i\mid \bSbar_i,T_i=k) - \mubar_k}{l_k(\bSbar_i)} \right] \\
&= \iint K_h(\bs_2 - \bs_1) \left\{ \pi_k(\bs_2)-\pi_k(\bs_1)\right\} \xi_k(\bs_1) f(\bs_2)d\bs_2 d\bs_1 \\
&= \iint K(\bpsi_1) \left\{ \pi_k(h\bpsi_1 + \bs_1)-\pi_k(\bs_1)\right\} \xi_k(\bs_1) f(h\bpsi_1 + \bs_1)d\bpsi_1 d\bs_1\\
&= \iint K(\bpsi_1) \left\{ h\bpsi_1\trans\ddbs\pi_k(\bs_1) + \ldots + \frac{h^q}{q!}\bpsi_1^{\otimes q} \otimes \ddbsq\pi_k(\bs_1^*)\right\} f(h\bpsi_1 + \bs_1)d\bpsi_1 \xi_k(\bs_1) d\bs_1 \\
&= O_p(h^q),
\end{align}
where $\bs^*$ is such that $\norm{\bs^* - \bs_1} \leq h\norm{\bpsi_1}$, and the last equality follows from bounding $\ddbsq\pi_k(\bs_1^*)$ and $f(h\bpsi_1 + \bs_1)$.  We have now have that:
\begin{align}
\Wscrtld_{nc,2,k} = O_p(n^{1/2}h^q) + O_p(n^{-1/2}h^{-2}).
\end{align}

We now proceed to expand $\Wscrtld_{3,k}$.  
We then first analyze the gradients in general, under model $\Mscr_{\pi}$, and under model $\Mscr_{\pi}\cap \Mscr_{\mu}$, using Lemma \ref{lm:avgrd}.  First note that:
\begin{align}
\Wscrtld_{3,k} &= n^{-1/2}\sum_{i=1}^n \left\{ \ddbalphT \frac{1}{\pihat_k(\bX_i;\balphbar,\bbetabar)}(\balphhat-\balphbar) + \ddbbetaT \frac{1}{\pihat_k(\bX_i;\balphbar,\bbetabar)}(\bbetahat-\bbetabar)\right\}I(T_i = k)(Y_i - \mubar_k) \\
&\qquad + O_p\left\{ n^{1/2}\left(\norm{\balphhat-\balphbar}^2 + \norm{\bbetahat-\bbetabar}^2 + \norm{\balphhat-\balphbar}\norm{\bbetahat - \bbetabar}\right)\right\},
\end{align}
using that $\ddbalphT\pihat_k(\bX_i;\bthet)^{-1}$ and $\ddbbetaT\pihat_k(\bX_i;\bthet)^{-1}$ are Lipshitz continuous in $\bthet$.  Now it can be shown that $\P\{\balphhat_{\Ascrpi^c} = \bzero\} \to 1$ and $\P\{ \bbetahat_{\Ascrmu^c} = \bzero\} \to 1$, using arguments from \cite{hui2015tuning} and \cite{zou2009adaptive} when working models are correctly specified.  It can also be shown that this still holds under misspecified models, provided that the least false parameters $\balphbar$ and $\bbetabar$ exist and are sparse, using arguments similar to those from \cite{lu2012robustness} and \cite{zou2009adaptive}.  Let:
\begin{align}
\bu_{k,n} = n^{-1}\sum_{i=1}^n \ddbalph \frac{I(T_i=k)(Y_i-\mubar_k)}{\pihat_k(\bX_i;\balphbar,\bbetabar)} \text{ and }\bv_{k,n} = n^{-1}\sum_{i=1}^n \ddbbeta \frac{I(T_i=k)(Y_i-\mubar_k)}{\pihat_k(\bX_i;\balphbar,\bbetabar)}.
\end{align}

Using that $\P\{\balphhat_{\Ascrpi^c} = \bzero\} \to 1$ and $\P\{ \bbetahat_{\Ascrmu^c} = \bzero\} \to 1$, we have that $\bu_{k,n,\Ascrpi^c}\trans n^{1/2}(\balphhat-\balphbar)_{\Ascrpi^c} = o_p(1)$, $\bv_{k,n,\Ascrmu^c}\trans n^{1/2}(\bbetahat-\bbetabar)_{\Ascrmu^c}=o_p(1)$, $n^{1/2}\norm{(\balphhat-\balphbar)_{\Ascrpi^c}}^2=o_p(1)$, and $n^{1/2}\norm{(\bbetahat-\bbetabar)_{\Ascrmu^c}}^2=o_p(1)$, so that:
\begin{align*}
\Wscrtld_{3,k} &= \bu_{k,n,\Ascrpi}\trans n^{1/2}(\balphhat-\balphbar)_{\Ascrpi} + \bu_{k,n,\Ascrpi^c}\trans n^{1/2}(\balphhat-\balphbar)_{\Ascrpi^c} \\
&\qquad + \bv_{k,n,\Ascrmu}\trans n^{1/2}(\bbetahat-\bbetabar)_{\Ascrmu} + \bv_{k,n,\Ascrmu^c}\trans n^{1/2}(\bbetahat-\bbetabar)_{\Ascrmu^c} \\
&\qquad + O_p\left\{ n^{1/2}\left(\norm{\balphhat-\balphbar}^2 + \norm{\bbetahat-\bbetabar}^2 + \norm{\balphhat-\balphbar}\norm{\bbetahat - \bbetabar}\right)\right\} \\
&= \bu_{k,n,\Ascrpi}\trans n^{1/2}(\balphhat-\balphbar)_{\Ascrpi} + \bv_{k,n,\Ascrmu}\trans n^{1/2}(\bbetahat-\bbetabar)_{\Ascrmu} \\
&\qquad + O_p\left\{ n^{1/2}\left(\norm{(\balphhat-\balphbar)_{\Ascrpi}}^2 + \norm{(\bbetahat-\bbetabar)_{\Ascrmu}}^2 + \norm{(\balphhat-\balphbar)_{\Ascrpi}}\norm{(\bbetahat - \bbetabar)_{\Ascrmu}}\right)\right\}.
\end{align*}

Applying Lemma \ref{lm:avgrd} to the gradient restricted to the respective active sets:
\begin{align}
&n^{-1}\sum_{i=1}^n \left\{\ddbalphT \frac{I(T_i =k)(Y_i - \mubar_k)}{\pihat_k(\bX_i;\balphbar,\bbetabar)}\right\}_{\Ascrpi} \\
&\qquad = \E\left[ \Kdot_h(\bSbar_{ji})\trans \left\{ \pi_k(\bSbar_i) - I(T_j=k)\right\}\frac{I(T_i =k)(Y_i - \mubar_k)}{\pi_k(\bSbar_i)l_k(\bSbar_i)}\bXdagT_{ji}\right]_{\Ascrpi} + O_p(b_n) \\
&\qquad = \bu_{k,\Ascrpi}\trans + O_p(b_n)\\
&n^{-1}\sum_{i=1}^n \left\{\ddbbetaT \frac{I(T_i =k)(Y_i - \mubar_k)}{\pihat_k(\bX_i;\balphbar,\bbetabar)} \right\}_{\Ascrmu}\\
&\qquad = \E\left[ \Kdot_h(\bSbar_{ji})\trans \left\{ \pi_k(\bSbar_i) - I(T_j=k)\right\}\frac{I(T_i =k)(Y_i - \mubar_k)}{\pi_k(\bSbar_i)l_k(\bSbar_i)}\bXddagT_{ji}\right]_{\Ascrmu} + O_p(b_n)\\
&\qquad = \bv_{k,\Ascrmu}\trans + O_p(b_n).
\end{align}

We now verify that $\bu_{k,\Ascrpi}$ and $\bv_{k,\Ascrmu}$ are $O_p(1)$ in general.  First note that:
\begin{align}
&\bu_{k,\Ascrpi}\trans = \E\left( \Kdot_h(\bSbar_{ji})\trans \frac{\pi_k(\bSbar_i) - I(T_j=k)}{l_k(\bSbar_i)}\left[\E(Y_i - \mubar_k \mid \bSbar_i,T_i=k)\bXdagT_{j} \right.\right.\\
&\qquad\qquad\qquad \left.\left. - \E\left\{(Y_i - \mubar_k) \bXdagT_i \mid \bSbar_i,T_i = k\right\}\right]\vphantom{\E\Kdot_h(\bSbar_{ji})}\right)_{\Ascrpi} \\
&= \E\left( \frac{\Kdot_h(\bSbar_{ji})\trans}{f(\bSbar_i)} \left[\E(Y_i - \mubar_k \mid \bSbar_i,T_i=k)\E(\bXdagT_j \mid \bSbar_j)- \E\left\{(Y_i - \mubar_k) \bXdagT_i \mid \bSbar_i,T_i = k\right\}\right] \right.\\
&\left. - \Kdot_h(\bSbar_{ji})\trans\frac{\pi_k(\bSbar_j)}{l(\bSbar_i)} \left[\E(Y_i - \mubar_k \mid \bSbar_i,T_i=k)\E(\bXdagT_j \mid \bSbar_j, T_j=k) - \E\left\{(Y_i - \mubar_k) \bXdagT_i \mid \bSbar_i,T_i = k\right\}\right]\vphantom{\frac{\Kdot_h(\bSbar_{ji})}{l_k(\bSbar_i)}}\right)_{\Ascrpi} \\
&= \sum_{u=1}^4 \E \left\{ \Kdot_h(\bSbar_{ji})\trans \frac{\bdeta_{1,u,k}(\bSbar_i)}{f(\bSbar_i)}\frac{\bdeta_{2,u,k}(\bSbar_j)}{f(\bSbar_j)}\right\}_{\Ascrpi},
\end{align}
where:
\begin{align}
&\bdeta_{1,1,k}(\bs) = \E(Y_i - \mu_k \mid \bSbar_i = \bs,T_i=k) \qquad \bdeta_{2,1,k}(\bs) = \E(\bXdagT_j\mid \bSbar_j=\bs)f(\bs)\\
&\bdeta_{1,2,k}(\bs) = \E\left\{(Y_i - \mu_k)\bXdagT_i \mid \bSbar_i = \bs,T_i=k\right\} \qquad \bdeta_{2,2,k}(\bs) = f(\bs) \\
&\bdeta_{1,3,k}(\bs) = \frac{\E(Y_i - \mu_k \mid \bSbar_i = \bs,T_i=k)}{\pi_k(\bs)} \qquad \bdeta_{2,3,k}(\bs) = l_k(\bs)\E(\bXdagT_j\mid \bSbar_j=\bs,T_j=k) \\
&\bdeta_{1,4,k}(\bs) = \frac{\E(Y_i - \mu_k \mid \bSbar_i = \bs,T_i=k)}{\pi_k(\bs)} \qquad \bdeta_{2,4,k}(\bs) = l_k(\bs).
\end{align}
Each of the four terms in $\bu_{k,\Ascrpi}\trans$ can be simplified through change-of-variables:
\begin{align}
&\E \left\{ \Kdot_h(\bSbar_{ji})\trans \frac{\bdeta_{1,u,k}(\bSbar_i)}{f(\bSbar_i)}\frac{\bdeta_{2,u,k}(\bSbar_j)}{f(\bSbar_j)}\right\} = \iint \Kdot_{h}(\bs_{21})\trans \bdeta_{1,u,k}(\bs_2)\bdeta_{2,u,k}(\bs_2)d\bs_1d\bs_2 \\
&\qquad\qquad = \iint h^{-1}\Kdot(\bpsi_2)\trans \bdeta_{1,u,k}(h\bpsi_2 + \bs_2)\bdeta_{2,u,k}(\bs_2)d\bpsi_2d\bs_2.
\end{align}
For some vector $\bu$, let $\Kdot(\bu) = (\Kdot(\bu)_{\{1\}}, \Kdot(\bu)_{\{2\}})\trans$ be the partial derivatives of $K(\bu)$ with respect to the first and second components of $\bu$, evaluated at $\bu$.  Similarly, for some $\bs_1$ and $\bs_2$, let $\left\{\bdeta(\bs_1)_{1,u,k}\bdeta(\bs_2)_{1,u,k}\right\}_{\{(i,j)\}}$ denote the $(i,j)$-th element of $\bdeta(\bs_1)_{1,u,k}\bdeta(\bs_2)_{1,u,k}$ evaluated at $\bs_1$ and $\bs_2$, for $i=1,2$ and $j=1,\ldots,p+1$.  Applying integration by parts, the $j$-th element of the above expectation is:
\begin{align}
&\sum_{i=1}^2 \iint h^{-1}\Kdot(\bpsi_2)_{\{i\}} \left\{ \bdeta_{1,u,k}(h\bpsi_2 + \bs_2)\bdeta_{2,u,k}(\bs_2)\right\}_{\{ (i,j)\}}d\bpsi_2d\bs_2 \\
&\qquad = \sum_{i=1}^2 \int h^{-1} K(\bpsi_2)\left\{ \bdeta_{1,u,k}(h\bpsi_2 + \bs_2)\bdeta_{2,u,k}(\bs_2)\right\}_{\{ (i,j)\}}\Big|_{\bpsi_2}d\bs_2 \\
&\qquad\qquad - \iint K(\bpsi_2)\ddpsitwoi\left\{ \bdeta_{1,u,k}(h\bpsi_2 + \bs_2)\bdeta_{2,u,k}(\bs_2)\right\}_{\{ (i,j)\}}d\bpsi_2d\bs_2 \\
&\qquad = - \sum_{i=1}^2 \iint K(\bpsi_2)\ddpsitwoi\left\{ \bdeta_{1,u,k}(h\bpsi_2 + \bs_2)\bdeta_{2,u,k}(\bs_2)\right\}_{\{ (i,j)\}}d\bpsi_2d\bs_2 \\
&\qquad = O(1),
\end{align}
where the second to last and last equalities can be shown by bounding terms using that $\E(Y_i \mid \bSbar_i=\bs,T_i=k)$, $\pi_k(\bs)$, $f(\bs)$, $\E(Y_i\bX_i\mid \bSbar_i=\bs,T_i=k)$, $\E(\bX_i\mid\bSbar_i = \bs,T_i=k)$ are differentiable in $\bs$ for $k=0,1$, $\E(\bX\mid \bSbar=\bs)$ is continuous in $\bs$, $\Xscr$ is compact, and $K(\bu)$ is a kernel function.  Consequently:
\begin{align}
\bu_{k,\Ascrpi}\trans = \sum_{u=1}^4 \E \left\{ \Kdot_h(\bSbar_{ji})\trans \frac{\bdeta_{1,u,k}(\bSbar_i)}{f(\bSbar_i)}\frac{\bdeta_{2,u,k}(\bSbar_j)}{f(\bSbar_j)}\right\}_{\Ascrpi} = O(1).
\end{align}
Applying the same argument it can be shown that $\bv_{k,\Ascrmu}\trans = O(1)$ for $k=0,1$ as well.  

We now consider simplifying $\bv_{k,\Ascr_{\beta}}\trans$ under $\Mscr_{\pi}$.  First we note that under $\Mscr_{\pi}$, $T \indep \bX\mid \balphbar\trans\bX$.  This implies that $T\indep\bX \mid \bSbar$.  Applying this and similar calculations used above for $\bu_{k,\Ascrpi}\trans$:
\begin{align} \label{e:vkunderps}
&\bv_{k,\Ascr_{\beta}}\trans
&=\E\left[ \Kdot_h(\bSbar_{ji})\trans\frac{\pi_k(\bSbar_i)-\pi_k(\bSbar_j)}{l_k(\bSbar_i)}\left\{\E(Y_i\mid \bSbar_i,T_i=k)\E(\bXddagT_j \mid \bSbar_j)-\E\left(Y_i\bXddagT_i\mid \bSbar_i,T_i=k\right)\right\}\right]_{\Ascrmu}.
\end{align}
We now evaluate: 
\begin{align}
&\bv_{k,\Ascrmu}\trans = \Bigg[\iint \Kdot_h(\bs_{21})\trans\frac{\pi_k(\bs_1) - \pi_k(\bs_2)}{\pi_k(\bs_1)}f(\bs_2)\left\{\E(Y_i\mid \bSbar_i = \bs_1,T_i=k)\E(\bXddagT_j\mid \bSbar_j=\bs_2) \right.\\
&\qquad\qquad\qquad\left. -\E\left( Y_i \bXddagT_i \mid \bSbar_i=\bs_1, T_i =k\right)\vphantom{\E}\right\}d\bs_2d\bs_1 \Bigg]_{\Ascrmu}\\
&= \Bigg[\iint h^{-1}\Kdot(\bpsi_1)\trans \frac{\pi_k(\bs_1) - \pi_k(h\bpsi_1+\bs_1)}{\pi_k(\bs_1)}f(h\bpsi_1+\bs_1)\left\{\E(Y_i\mid \bSbar_i = \bs_1,T_i=k) \right.\\
&\qquad\left. \E(\bXddagT_j\mid \bSbar_j=h\bpsi_1+\bs_1)-\E\left( Y_i \bXddagT_i \mid \bSbar_i=\bs_1, T_i =k\right)\vphantom{\E}\right\}d\bpsi_1d\bs_1 \Bigg]_{\Ascrmu}\\
&= \Bigg[-\iint \Kdot(\bpsi_1)\trans \frac{\bpsi_1\trans \ddbs\pi_k(\bs_1)+h\bpsi_1^{\otimes 2}\otimes \ddbssq\pi_k(\bs_1^*)}{\pi_k(\bs_1)}f(h\bpsi_1+\bs_1)\left\{\E(Y_i\mid \bSbar_i = \bs_1,T_i=k) \right.\\
&\qquad\left. \E(\bXddagT_j\mid \bSbar_j=h\bpsi_1+\bs_1)-\E\left(Y_i \bXddagT_i \mid \bSbar_i=\bs_1, T_i =k\right)\vphantom{\E}\right\}d\bpsi_1d\bs_1 \Bigg]_{\Ascrmu}\\
&= \Bigg[-\iint \Kdot(\bpsi_1)\trans \frac{\bpsi_1\trans \ddbs\pi_k(\bs_1)}{\pi_k(\bs_1)}f(\bs_1)\left\{\E(Y_i \mid \bSbar_i = \bs_1,T_i=k) \right.\\
&\qquad\left. \E(\bXddagT_j\mid \bSbar_j=\bs_1)-\E\left( Y_i \bXddagT_i \mid \bSbar_i=\bs_1, T_i =k\right)\vphantom{\E}\right\}d\bpsi_1d\bs_1 \Bigg]_{\Ascrmu}+O(h),
\end{align}
where $\bs^*$ is such that $\norm{\bs^* - \bs_1} \leq h\norm{\bpsi_1}$ and we use that $f(s)$ and $\E(X\mid \bSbar=\bs)$ are continuously differentiable and that $\pi_k(\bs)$ is twice continuously differentiable, $\Xscr$ is compact, $\Kdot(\bu)$ is bounded and integrable, to bound terms in the remainder.  After some re-arrangement, this can be further simplified:
\begin{align}
&\bv_{k,\Ascrmu}\trans = \Bigg\{-\int \frac{\ddbsT\pi_k(\bs_1)}{\pi_k(\bs_1)}\int\bpsi_1\Kdot(\bpsi_1)\trans d\bpsi_1f(\bs_1)\left[\E(Y_i \mid \bSbar_i = \bs_1,T_i=k) \right.\\
&\qquad\qquad\left. \E(\bXddagT_j\mid \bSbar_j=\bs_1)-\E\left( Y_i \bXddagT_i \mid \bSbar_i=\bs_1, T_i =k\right)\vphantom{\E}\right]d\bs_1 \Bigg\}_{\Ascrmu}+O(h)\\
&= \E\left[\frac{\ddbsT\pi_k(\bSbar_i)}{\pi_k(\bSbar_i)}\left\{\E(Y_i \mid \bSbar_i,T_i=k)\E(\bXddagT_i\mid \bSbar_i)-\E\left( Y_i \bXddagT_i \mid \bSbar_i, T_i =k\right)\vphantom{\E}\right\}\right]_{\Ascrmu} +O(h) \label{e:vsimp}\\
&= \bzero + O(h),
\end{align}
where the second equality follows from that $\int\bpsi_1 \Kdot(\bpsi_1)\trans d\bpsi_1 = -\bI_{2\times 2}$ by integration by parts.  Let the partial derivatives of $\pi_k(\bs)$ with respect to $\bs$, evaluated at $\bs$, be denoted by $\partial\pi_k(\bs)/\partial\bs\trans = (\partial\pi_k(\bs)/\partial s_1, \partial\pi_k(\bs)/\partial s_2)$.  Under $\Mscr_{\pi}$ when the PS model is correct, $\partial\pi_k(\bs)/\partial s_2 = 0$ since $\pi_k(\bs)$ would depend only on the first argument.  The last equality follows from noting this and that the first row of $\bXddagT_i$ is $\bzero\trans$.

Finally, we consider the case under $\Mscr_{\pi} \cap \Mscr_{\mu}$.  In this case we have not only that $T\indep \bX \mid \bSbar$ but also $\E(Y\mid\bSbar,T=k,\bX)=g_{\mu}(\betabar_0+\betabar_1 k + \bbetabar\trans\bX)=\E(Y\mid\bSbar,T=k)$.  Thus in this case:
\begin{align}
\E(Y_i \bXdagT \mid \bSbar_i, T_i = k) &= \E(Y_i \mid \bSbar_i, T_i = k)\E(\bXdagT\mid \bSbar_i).
\end{align}
Consequently, continuing from an analogous expression for $\bu_{k,\Ascrpi}$ from \eqref{e:vkunderps}:
\begin{align}
&\bu_{k,\Ascrpi}\trans = \\
&\E\left[ \Kdot_h(\bSbar_{ji})\trans\frac{\pi_k(\bSbar_i)-\pi_k(\bSbar_j)}{l_k(\bSbar_i)}\E(Y_i - \mubar_k\mid \bSbar_i,T_i=k)\left\{\E(\bXdagT_j \mid \bSbar_j)-\E(\bXdagT_i \mid \bSbar_i)\right\}\right]_{\Ascrpi}.
\end{align}
Evaluating the expression, we obtain that:
\begin{align}
\bu_{k,\Ascrpi}\trans &= \Bigg\{\iint \Kdot_h(\bs_{21})\trans \frac{\pi_k(\bs_1)-\pi_k(\bs_2)}{\pi_k(\bs_1)} \E(Y_i - \mubar_k\mid\bSbar_i=\bs_1,T_i=k)\\
&\qquad\qquad \left\{ \E(\bXdagT_j\mid \bSbar_j = \bs_2) - \E(\bXdagT_i\mid \bSbar_i = \bs_1)\right\}f(\bs_2)d\bs_2 d\bs_1 \Bigg\}_{\Ascrpi} \\
&= \Bigg\{\iint h^{-1}\Kdot(\bpsi_1)\trans \frac{\pi_k(\bs_1)-\pi_k(h\bpsi_1+\bs_1)}{\pi_k(\bs_1)} \E(Y_i - \mubar_k\mid\bSbar_i=\bs_1,T_i=k)\\
&\qquad\qquad \left\{ \E(\bXdagT_j\mid \bSbar_j = h\bpsi_1+\bs_1) - \E(\bXdagT_i\mid \bSbar_i = \bs_1)\right\}f(h\bpsi_1+\bs_1)d\bpsi_1 d\bs_1 \Bigg\}_{\Ascrpi} \\
&= \Bigg\{-h\iint \Kdot(\bpsi_1)\trans \frac{\bpsi_1\trans\ddbs\pi_k(\bs_1^*)}{\pi_k(\bs_1)} \E(Y_i - \mubar_k\mid\bSbar_i=\bs_1,T_i=k)\\
&\qquad\qquad \left\{ \bpsi_1\otimes\ddbs\E(\bXdagT_j\mid \bSbar_j = \bs_1^{**}) \right\}f(h\bpsi_1+\bs_1)d\bpsi_1 d\bs_1 \Bigg\}_{\Ascrpi} \\
&= O(h),
\end{align}
where $\bs_1^*$ and $\bs_1^{**}$ are values such that $\norm{\bs_1^* - \bs_1} \leq h\norm{\bpsi_1}$ and $\norm{\bs_1^{**} - \bs_1} \leq h\norm{\bpsi_1}$.  The last equality can be shown by bounding terms inside the integral by using that $\pi_k(\bs)$ is continuously differentiable and bounded away from $0$, $\E(Y - \mubar_k\mid \bSbar =\bs, T = k)$ is continuous, $\E(\bX\mid\bSbar=\bs)$ is continuously differentaible, $f(\bs)$ is continuous, and $\Xscr$ is compact.  The same argument can be applied to show that $\bv_{k,\Ascrmu}\trans=O(h)$ for $k=0,1$, under $\Mscr_{\pi}\cap\Mscr_{\mu}$.

We now collect all the results in the main expansion:
\begin{align}
\Wscrhat_k &= n^{-1/2}\sum_{i=1}^n \frac{I(T_i =k)}{\pi_k(\bSbar_i)}(Y_i - \mubar_k) -\left\{\frac{I(T_i =k)}{\pi_k(\bSbar_i)}-1\right\}\E(Y_i - \mubar_k\mid \bSbar_i,T_i=k)  \\
&\qquad + \bu_{k,\Ascrpi}\trans n^{1/2}(\balphhat-\balphbar)_{\Ascrpi} + \bv_{k,\Ascrmu}\trans n^{1/2}(\bbetahat-\bbetabar)_{\Ascrmu} \\
&\qquad + O_p(b_n) + O_p(n^{1/2}h^q + n^{-1/2}h^{-2})+ O_p(h^q + n^{-1/2}h^{-2})+O_p(n^{1/2}a_n^2) \\
&= n^{-1/2}\sum_{i=1}^n \frac{I(T_i =k)Y_i}{\pi_k(\bX_i;\bthetbar)} -\left\{\frac{I(T_i =k)}{\pi_k(\bX_i;\bthetbar)}-1\right\}\E(Y_i\mid \balphbar\trans\bX_i,\bbetabar\trans\bX_i,T_i=k) -\mubar_k \\
&\qquad + \bu_{k,\Ascrpi}\trans n^{1/2}(\balphhat-\balphbar)_{\Ascrpi} + \bv_{k,\Ascrmu}\trans n^{1/2}(\bbetahat-\bbetabar)_{\Ascrmu} + O_p(n^{1/2}h^q + n^{-1/2}h^{-2})
\end{align}
where $\bu_{k,\Ascrpi}\trans$ and $\bv_{k,\Ascrmu}$ are deterministic vectors such that, for $k=0,1$, $\bv_{k,\Ascrmu}=\bzero$ under $\Mscr_{\pi}$ and $\bu_{k,\Ascrpi}=\bv_{k,\Ascrmu}=\bzero$ under $\Mscr_{\pi}\cap\Mscr_{\mu}$.  The final form of the expansion by using that $\balphhat_{\Ascrpi}$ and $\bbetahat_{\Ascrmu}$ admit an asymptotically linear expansion, using arguments similar to those from \cite{zou2009adaptive} and \cite{lu2012robustness} so that:
\begin{align*}
n^{1/2}(\balphhat-\balphbar)_{\Ascrpi} = n^{-1/2}\sum_{i=1}^n \bPsi_{i,\Ascrpi} + o_p(1) \text{ and } n^{1/2}(\bbetahat - \bbetabar)_{\Ascrmu} = n^{-1/2}\sum_{i=1}^n \bUpsi_{i,\Ascrmu} + o_p(1),
\end{align*}
where $\bPsi_{i,\Ascrpi} = \E(\bU_{\balph,\Ascrpi}\bU_{\balph,\Ascrpi}\trans)^{-1}\bU_{\balph,i,\Ascrpi}$ and $\bUpsi_{i,\Ascrmu} = \E(\bU_{\bbeta,\Ascrmu}\bU_{\bbeta,\Ascrmu}\trans)^{-1}\bU_{\bbeta,i,\Ascrmu}$, $\bU_{\balph,i} = \bX_i\left\{ T_i - \pi_1(\bX_i;\alphbar_0,\balphbar)\right\}$, and $\bU_{\bbeta,i} = \bX_i\left\{ Y_i - \mu_{T_i}(\bX_i;\betabar_0,\betabar_1,\bbetabar)\right\}$, $\pi_1(\bx;\alpha_0,\balph) = g_{\pi}(\alpha_0+\balph\trans\bx)$, and $\mu_k(\bx;
\beta_0,\beta_1,\bbeta)=g_{\mu}(\beta_0+\beta_1 k + \bbeta\trans\bx)$.

We next verify Theorem 2 to characterize the influence function contribution from estimating the PS.  Since $[\bU_{\balph,\Ascrpi}]$ is a finite dimensional subspace of $\Lscr_{2}^0$ spanned by the components of $\bU_{\balph,\Ascrpi}$, the projection of $\varphi_{i,k}$ onto it is given by population least squares:
\begin{align*}
\Pi\left\{ \varphi_{i,k}\mid [\bU_{\balph,\Ascrpi}]\right\} = \E(\varphi_{i,k}\bU_{\balph,\Ascrpi}\trans)\E\left( \bU_{\balph,\Ascrpi}\bU_{\balph,\Ascrpi}\trans\right)^{-1}\bU_{\balph,\Ascrpi}.
\end{align*}
As discussed above, $n^{1/2}(\balphhat-\balphbar)_{\Ascrpi} = \E( \bU_{\balph,\Ascrpi}\bU_{\balph,\Ascrpi}\trans)^{-1}n^{-1/2}\sum_{i=1}^n\bU_{\balph,i,\Ascrpi} + o_p(1)$.
It thus suffices to show that $\bu_{k,\Ascrpi}\trans = -\E(\varphi_{i,k}\bU_{\balph,\Ascrpi}\trans) + o(1)$.

We proceed by simplifying the covariance term:
\begin{align}
&\E(\varphi_{i,k}\bU_{\balph,\Ascrpi}\trans) = \E\left(\left[ \frac{I(T_i=k)Y_i}{\pi_k(\bX_i;\bthetbar)} - \left\{ \frac{I(T_i =k)}{\pi_k(\bX_i;\bthetbar)}-1\right\}\E(Y_i\mid \bSbar_i,T_i=k)-\mubar_k\right]\bU_{\balph,\Ascrpi}\trans\right) \\
&= \E\left(\left[ \frac{I(T_i=k)Y_i}{\pi_k(\bX_i;\bthetbar)} - \left\{ \frac{I(T_i =k)}{\pi_k(\bX_i;\bthetbar)}-1\right\}\E(Y_i\mid \bSbar_i,T_i=k)\right]\bX_{i,\Ascrpi}\trans\left\{ T_i - \pi_1(\bX_i;\alphbar_0,\balphbar)\right\}\right).
\end{align}

First consider the $k=1$ case.  Using that $\pi_1(\bX;\alphbar_0,\balphbar)=\pi_1(\bSbar)$ and $T\indep\bX\mid \bSbar$ under $\Mscrpi$:
\begin{align}
&\E(\varphi_{i,1}\bU_{\balph,\Ascrpi}\trans) = \E\left(\left[ \frac{I(T_i=1)Y_i}{\pi_1(\bX_i;\bthetbar)} - \left\{ \frac{I(T_i =1)}{\pi_1(\bX_i;\bthetbar)}-1\right\}\E(Y_i\mid \bSbar_i,T_i=1)\right]\bX_{i,\Ascrpi}\trans T_i \right) \\
&- \E\left(\left[ \frac{I(T_i=1)Y_i}{\pi_1(\bX_i;\bthetbar)} - \left\{ \frac{I(T_i =1)}{\pi_1(\bX_i;\bthetbar)}-1\right\}\E(Y_i\mid \bSbar_i,T_i=1)\right]\bX_{i,\Ascrpi}\trans \pi_1(\bX_i;\alphbar_0,\balphbar) \right) \\
&= \E\left[ \E\left( Y_i \bX_{i,\Ascrpi}\trans \mid \bSbar_i, T_i=1\right) - \left\{1-\pi_1(\bSbar_i)\right\}\E(\bX_{i,\Ascrpi}\trans\mid \bSbar_i,T_i=1) \E(Y_i \mid \bSbar_i,T_i=1)\right] \\
&- \E\left[ \E\left( Y_i \bX_{i,\Ascrpi}\trans \mid \bSbar_i, T_i=1\right)\pi_1(\bSbar_i) - \left\{\pi_1(\bSbar_i)-\pi_1(\bSbar_i)\right\}\E(\bX_{i,\Ascrpi}\trans\mid \bSbar_i,T_i=1) \E(Y_i \mid \bSbar_i,T_i=1)\right] \\
&=\E\left( \left\{ 1-\pi_1(\bSbar_i)\right\}\left\{ \E(Y_i\bX_{i,\Ascrpi}\trans\mid \bSbar_i,T_i=1)-\E(\bX_{i,\Ascrpi}\trans\mid \bSbar_i)\E(Y_i \mid \bSbar_i,T_i=1)\right\}\right) \\
&=-\bu_{1,\Ascrpi}\trans + O(h),
\end{align}
where $\bu_{k,\Ascrpi}\trans$ has same form as derived in \eqref{e:vsimp} for $\bv_{k,\Ascrmu}$, except that $\bXddagT_i$ is replaced by $\bXdagT_i$.

In the $k=0$ case, again using $\pi_1(\bX;\alphbar_0,\balphbar)=\pi_1(\bSbar)$ and $T\indep\bX\mid \bSbar$ under $\Mscrpi$:
\begin{align}
&\E(\varphi_{i,0}\bU_{\balph,\Ascrpi}\trans) = \E\left(\left[ \frac{I(T_i=0)Y_i}{\pi_0(\bX_i;\bthetbar)} - \left\{ \frac{I(T_i =0)}{\pi_0(\bX_i;\bthetbar)}-1\right\}\E(Y_i\mid \bSbar_i,T_i=0)\right]\bX_{i,\Ascrpi}\trans T_i \right) \\
&- \E\left(\left[ \frac{I(T_i=0)Y_i}{\pi_0(\bX_i;\bthetbar)} - \left\{ \frac{I(T_i =0)}{\pi_0(\bX_i;\bthetbar)}-1\right\}\E(Y_i\mid \bSbar_i,T_i=0)\right]\bX_{i,\Ascrpi}\trans \pi_1(\bX_i;\alphbar_0,\balphbar) \right) \\
&= \E\left\{ \E(Y_i\mid \bSbar_i,T_i=0)\E(\bX_{i,\Ascrpi}\trans \mid \bSbar_i)\pi_1(\bSbar_i)\right\} -\E\left\{\E(Y_i\bX_{i,\Ascrpi}\trans\mid \bSbar_i,T_i=0)\pi_1(\bSbar_i)\right\} \\
&= \E\left\{ \pi_1(\bSbar_i)\E(Y_i\mid \bSbar_i,T_i=0)\E(\bX_{i,\Ascrpi}\trans\mid \bSbar_i) - \E(Y_i\bX_{i,\Ascrpi}\trans\mid \bSbar_i,T_i=0)\right\} \\
&= - \bu_{0,\Ascrpi}\trans + O(h).
\end{align}

\section*{Web Appendix C: Covariate Correlation Matrix in Simulations}
As described in the Simulation Study Section 4.1, the covariates were generated as $\bX = diag(\bSigtld^{-1/2})(\bXtld - \mutld)$, and here we report the values of its covariance matrix $diag(\bSigtld^{-1/2}) \bSigtld diag(\bSigtld^{-1/2})$ for each group of $15$ covariates. The covariates are ordered as ulcerative colitis disease subtype, female gender, use of anti-TNF therapy, use of immunomodulator, primary sclerosing cholangitis (PSC), elevated C-reactive protein, race1, race2, counts of ever smoking from NLP, counts of current smoking from NLP, counts of never smoking from NLP, utilization score, disease duration, and age. For simulations when $p>15$, we used block diagonal matrix where we repeated this correlation structure for each group of $15$ covariates.

\setcounter{MaxMatrixCols}{15}
$$\mbox{\footnotesize$
\begin{bmatrix}{}
  1.00 & -0.01 & -0.16 & -0.14 & 0.08 & -0.03 & 0.01 & 0.01 & 0.02 & -0.05 & 0.05 & 0.08 & -0.01 & -0.05 & 0.08 \\ 
  -0.01 & 1.00 & 0.00 & -0.02 & -0.08 & -0.02 & 0.02 & 0.02 & -0.01 & -0.02 & 0.02 & 0.02 & 0.04 & 0.05 & 0.11 \\ 
  -0.16 & 0.00 & 1.00 & 0.32 & -0.02 & 0.19 & -0.12 & 0.01 & -0.05 & -0.03 & 0.05 & -0.16 & 0.02 & 0.04 & 0.07 \\ 
  -0.14 & -0.02 & 0.32 & 1.00 & 0.02 & 0.21 & -0.17 & 0.00 & -0.07 & -0.02 & 0.08 & -0.17 & 0.02 & 0.04 & 0.10 \\ 
  0.08 & -0.08 & -0.02 & 0.02 & 1.00 & 0.02 & 0.01 & 0.03 & -0.01 & -0.00 & 0.02 & -0.02 & -0.01 & 0.00 & 0.05 \\ 
  -0.03 & -0.02 & 0.19 & 0.21 & 0.02 & 1.00 & -0.21 & 0.03 & -0.05 & -0.09 & 0.13 & -0.10 & 0.06 & 0.08 & 0.14 \\ 
  0.01 & 0.02 & -0.12 & -0.17 & 0.01 & -0.21 & 1.00 & 0.00 & 0.04 & 0.07 & -0.17 & 0.10 & -0.04 & -0.05 & -0.12 \\ 
  0.01 & 0.02 & 0.01 & 0.00 & 0.03 & 0.03 & 0.00 & 1.00 & -0.08 & -0.00 & -0.03 & -0.07 & -0.01 & 0.01 & 0.08 \\ 
  0.02 & -0.01 & -0.05 & -0.07 & -0.01 & -0.05 & 0.04 & -0.08 & 1.00 & 0.01 & -0.07 & -0.03 & -0.09 & -0.08 & -0.08 \\ 
  -0.05 & -0.02 & -0.03 & -0.02 & -0.00 & -0.09 & 0.07 & -0.00 & 0.01 & 1.00 & -0.16 & 0.01 & -0.03 & -0.04 & -0.08 \\ 
  0.05 & 0.02 & 0.05 & 0.08 & 0.02 & 0.13 & -0.17 & -0.03 & -0.07 & -0.16 & 1.00 & 0.25 & 0.19 & 0.17 & 0.28 \\ 
  0.08 & 0.02 & -0.16 & -0.17 & -0.02 & -0.10 & 0.10 & -0.07 & -0.03 & 0.01 & 0.25 & 1.00 & 0.34 & 0.19 & 0.21 \\ 
  -0.01 & 0.04 & 0.02 & 0.02 & -0.01 & 0.06 & -0.04 & -0.01 & -0.09 & -0.03 & 0.19 & 0.34 & 1.00 & 0.88 & 0.18 \\ 
  -0.05 & 0.05 & 0.04 & 0.04 & 0.00 & 0.08 & -0.05 & 0.01 & -0.08 & -0.04 & 0.17 & 0.19 & 0.88 & 1.00 & 0.18 \\ 
  0.08 & 0.11 & 0.07 & 0.10 & 0.05 & 0.14 & -0.12 & 0.08 & -0.08 & -0.08 & 0.28 & 0.21 & 0.18 & 0.18 & 1.00 \\ 
  \end{bmatrix}
$}$$

\section*{Web Appendix D: Proportion of Observations with Negative Values of DiPS}

As discussed in Section 2.1, the following reports the proportion of total number of observations ($n$) in simulations 
that have negative values of DiPS, by varying size of $p$ and model specification scenario.  The results are 
averaged over $R=1,000$ repetitions.
\bigskip

\scalebox{1} {
  \centering
  \bgroup
  \def\arraystretch{.6}%
  \setlength\tabcolsep{.5em}
	  \begin{tabular}{ccccc}
	  \toprule
	  & \textbf{p} & \textbf{Both Correct} & \textbf{Misspecified $\mu_k(\bx)$} & \textbf{Misspecified $\pi_1(\bx)$} \\
	  \midrule
	  \multirow{3}[1]{*}{$n=500$} 
	  & 15    & 0.29\% & 0.52\% & 0.04\% \\
	  & 50    & 0.64\% & 0.85\% & 0.08\% \\
	  & 100   & 1.93\% & 2.10\% & 0.40\% \\ \hline
	  \multirow{3}[1]{*}{$n=5000$} 
	  		& 15	& 0.01\% & 0.03\% & 0.01\% \\
	  		& 50    & 0.01\% & 0.03\% & 0.00\% \\
	  		& 100   & 0.01\% & 0.03\% & 0.00\% \\
	  \bottomrule
    \end{tabular}%
    \egroup
	}

\end{document}